\theoremstyle{plain}
\newtheorem{theorem}{Theorem}
\newtheorem{corollary}{Corollary}
\newtheorem{proposition}{Proposition}
\renewenvironment{proof}{{\noindent\it Proof:}\quad}{\hfill $\square$\par}
\begin{document}
		%
		% paper title
		% Titles are generally capitalized except for words such as a, an, and, as,
		% at, but, by, for, in, nor, of, on, or, the, to and up, which are usually
		% not capitalized unless they are the first or last word of the title.
		% Linebreaks \\ can be used within to get better formatting as desired.
		% Do not put math or special symbols in the title.
		\title{RIS with Insufficient Phase Shifting \\Capability: Modeling, Beamforming, \\and Experimental Validations}
		%
		%
		% author names and IEEE memberships
		% note positions of commas and nonbreaking spaces ( ~ ) LaTeX will not break
		% a structure at a ~ so this keeps an author's name from being broken across
		% two lines.
		% use \thanks{} to gain access to the first footnote area
		% a separate \thanks must be used for each paragraph as LaTeX2e's \thanks
		% was not built to handle multiple paragraphs
		%
		
		\author{Lin Cao, Haifan Yin,~\IEEEmembership{Senior Member,~IEEE}, Li Tan, and Xilong Pei
			\thanks{This work was supported in part by the National Natural Science Foundation of China under Grants 62071191, 62071192, and 1214110.}
   			\thanks{The corresponding author is Haifan Yin.}
                \thanks{L. Cao, H. Yin, L. Tan and X. Pei are with School of Electronic Information and Communications, Huazhong University of Science and Technology,
430074 Wuhan, China  (e-mail: lincao@hust.edu.cn; yin@hust.edu.cn; ltan@hust.edu.cn;  pei@hust.edu.cn
).}}

	\maketitle
	
	% As a general rule, do not put math, special symbols or citations
	% in the abstract or keywords.
\begin{abstract}
Most research works on reconfigurable intelligent surfaces (RIS) rely on idealized models of the reflection coefficients, i.e., uniform reflection amplitude for any phase and sufficient phase shifting capability. In practice however, such models are oversimplified. This paper introduces a realistic reflection coefficient model for RIS based on measurements. The reflection coefficients are modeled as discrete complex values that have non-uniform amplitudes and suffer from insufficient phase shift capability. We then propose a group-based query algorithm that takes the imperfect coefficients into consideration while calculating the reflection coefficients. We analyze the performance of the proposed algorithm, and derive the closed-form expressions to characterize the received power of an RIS-aided wireless communication system. The performance gains of the proposed algorithm are confirmed in simulations. Furthermore, we validate the proposed theoretical results by experiments with our fabricated RIS prototype systems. The simulation and measurement results match well with the theoretical analysis.

\end{abstract}

% Note that keywords are not normally used for peerreview papers.
\begin{IEEEkeywords}
Reconfigurable intelligent surface, practical reflection coefficient, performance analysis, wireless propagation measurements.
\end{IEEEkeywords}

% For peer review papers, you can put extra information on the cover
% page as needed:
% \ifCLASSOPTIONpeerreview
% \begin{center} \bfseries EDICS Category: 3-BBND \end{center}
% \fi
%
% For peerreview papers, this IEEEtran command inserts a page break and
% creates the second title. It will be ignored for other modes.
\IEEEpeerreviewmaketitle

\section{Introduction}\label{Sec:introduction}
% The very first letter is a 2 line initial drop letter followed
% by the rest of the first word in caps.
% 
% form to use if the first word consists of a single letter:
% \IEEEPARstart{A}{demo} file is ....
% 
% form to use if you need the single drop letter followed by
% normal text (unknown if ever used by the IEEE):
% \IEEEPARstart{A}{}demo file is ....
% 
% Some journals put the first two words in caps:
% \IEEEPARstart{T}{his demo} file is ....
% 
% Here we have the typical use of a "T" for an initial drop letter
% and "HIS" in caps to complete the first word.
\IEEEPARstart {A}{s} the fifth-generation (5G) mobile communication gradually matured, the sixth-generation (6G) mobile communication has appeared on the horizon, which calls for much higher data rates, connection density, and energy efficiency.
With the expansion of the network scale and the ever increasing throughput requirement, mobile communications are facing challenges of high energy consumption and low cost efficiency\cite{wqq_ActivePassive_zy4}.

% You must have at least 2 lines in the paragraph with the drop letter
% (should never be an issue)
The recently proposed reconfigurable intelligent surface (RIS) provides new paradigms for wireless communication, for its potential of re-designing the wireless propagation environment and counteract the adverse radio conditions. As a result, RIS is actively being discussed as a prospective technology for 6G \cite{zy2}. 
RIS is a two-dimensional array of sub-wavelength elements that can be configured using a large number of passive components \cite{whatisRIS}. The passive electromagnetic response of the structure of each component (e.g., phase and amplitude) is controlled by simple programmable components, such as
positive-intrinsic-negative (PIN) diode \cite{PIN2, han_PIN}, varactor diode \cite{5G8Prototype, varactor1}, micro electro mechanical systems (MEMS) switch \cite{kappa_MEMS, manjappa_MEMS}, etc. By jointly manipulating these elements, RIS will be able to build a programmable wireless environment with low additional power or hardware expense \cite{6_9}, thereby further increasing the spectrum efficiency, energy efficiency, and physical layer security aspects of wireless communication systems.

To explore the potential of RIS techniques, RIS-aided wireless communication systems have recently been investigated in various applications/setups. Researchers propose new cost-effective solutions with RIS that lead to high beamforming gains and effective interference suppression using only low-cost reflecting elements, such as physical layer security \cite{physical_layer1}, orthogonal frequency division multiplexing (OFDM) \cite{OFDM1}, and integrated sensing and communications (ISAC) \cite{ISAC1}, etc.
The existing works are mostly based on the following three assumptions to make the system model more concise yet idealized:
\begin{itemize}
	\item Continuous phase shifts at its reflecting elements,
	\item Uniform reflection amplitude at any phase shift,
	\item Sufficient phase shifting capability covering the range from 0 to $2\pi$.
\end{itemize}

However, the RIS element with continuous phase shift is very challenging to implement due to its limited size and cost.
Take the RISs with varactor diodes or PIN diodes as examples, the RIS element with finely tuned phase shifts requires a wide range of biasing voltages of varactor diodes or a lot of PIN diodes and the corresponding controlling signals from the RIS controller.
As such, for practical RISs, it is more cost-effective to consider discrete phase shifts with a small number of control bits for each element.
Besides, due to hardware limitations\cite{hardware_limitation1,hardware_limitation2}, it is difficult and unrealistic to implement an RIS satisfying the ideal reflection model, which implies the reflection amplitude of each element is uniform at any phase shift.
The experimental results reported in \cite{5G8Prototype} show that the amplitude and phase shift of the reflected signal in a practical RIS system with varactor diodes are related to both the frequency of the incident signal and the biasing voltage of the varactor diode.
This is due to the fact that changing the frequency of incident signal or the control voltage will shift the equivalent impedance of the RIS element, leading to a variation in the ohmic loss in the system, which subsequently affects the amplitude and phase shift of the reflected signal.
In reality, this has long been a problem in RIS implementation\cite{longterm_problem}.
Moreover, the phase shifting capability is not always sufficient. The experimental results in \cite{5G8Prototype} show that the phase response of RIS element is sensitive to the angle of incident signal, which is due to the RIS being spatially dispersive\cite{DiscretePhase, angle2}. Besides, most existing RIS elements have limited phase shifting capability and cannot cover the range from 0 to $2\pi$.

There have been some previous studies to investigate the practical system model of RIS-aided systems \cite{ nidl1,nidl3, nidl_r, EnoughPhase3, nidl_r1, PracticalModelWuqq, nidl_r2,  nidl2, EnoughPhase1}. For instance, the works in \cite{nidl1,nidl3, nidl_r, EnoughPhase3, nidl_r1} consider various aspects of RIS-aided systems with discrete phase shift models with uniform reflective amplitude at each element. Specifically, the authors of \cite{ nidl1} introduce beamforming optimization techniques considering discrete phase shifts, while the work in \cite{nidl3} focuses on channel estimation and passive beamforming with discrete phase shifts. Furthermore, robust beamforming for the RIS-aided communication systems is investigated under imperfect channel state information (CSI) in \cite{ nidl_r }. Additionally, the authors of \cite{EnoughPhase3} explore rate-splitting multiple access for RIS-aided cell-edge users with discrete phase shifts. The work in \cite{ nidl_r1} evaluates the performance of uplink RIS-aided communication systems, particularly focusing on the impact of limited phase shifts on data rates. On the other hand, studies such as \cite{ PracticalModelWuqq, nidl_r2,  nidl2, EnoughPhase1} explore scenarios involving the practical system model that takes into account the intertwinement between the amplitude and phase response, with varying approaches to channel estimation and beamforming optimization techniques. For instance, the authors of \cite{ PracticalModelWuqq} propose a practical phase shift model and beamforming optimization for intelligent reflecting surfaces, highlighting the importance of realistic modeling in system design. Moreover, the authors of \cite{ nidl_r2} provide a comprehensive performance analysis based on the model proposed in \cite{ PracticalModelWuqq}. Furthermore, the authors of \cite{ nidl2} focus on practical modeling and beamforming techniques for wideband systems aided by RIS, while the channel estimation for practical IRS-assisted OFDM systems is addressed in \cite{ EnoughPhase1}, accounting for discrete phase shifts.
In the studies mentioned above, however, the authors primarily focused on the discrete phase shift at RIS elements, or the phase-dependent amplitude of the reflective signal, without considering the limited phase shifting range, which is a realistic problem in practice.

Motivated by the above, we propose a novel practical model for RIS-aided wireless communication systems, wherein discrete reflection coefficients and limited phase shifting capability are accounted for.
Unlike the models described in aforementioned studies, the proposed model imposes a restriction that the RIS elements can only induce a phase shift within a certain range, such as $[0,2\pi/3)$.
We formulate and solve an optimization problem to maximize the received power at the user with the constraint of non-ideal reflection coefficients.
The main contributions of this paper are as follows:

\begin{itemize}
	\item Based on experimental measurements, we introduce a realistic model for the reflection coefficients, which, to the best of our knowledge, is the first model taking into account the limited phase shifting capability.
	
	\item With the above model, we formulate a maximization problem for the received power, which is non-convex and difficult to solve. We propose a group-based query algorithm to find the solutions efficiently by calculating the corresponding phase range of each discrete reflection coefficient.
	
	\item We analyze the performance of the proposed algorithm. The closed-form expressions for the performances of RIS-aided communication systems are derived, including the cases of uniform reflection amplitude and non-uniform reflection amplitude.
	
	\item We conduct experiments with our fabricated RIS prototypes, to evaluate and validate the theoretical performance under practical deployment conditions. Two different RIS prototypes working on 5.8 GHz and 2.6 GHz are employed in the measurements. The experimental results match well with our theoretical analysis.
\end{itemize}

The rest of this paper is organized as follows.
Section \ref{Sec:model} introduces the system model for RIS-aided communication systems and derives the received power.
In Section \ref{Sec:Performance}, we propose a realistic reflection coefficient model and a group-based query algorithm for solving the received power maximization problem.
In Section \ref{Sec:Sim} we validate our theoretic results using both simulations and experimental measurements.
Section \ref{Sec:Conclusion} concludes the paper.

\section{System Model}\label{Sec:model}

We consider an RIS-aided wireless communication system as shown in Fig.\ref{fig:Channel_model}, where an RIS is adopted to reflect the signal from the base station (BS) towards the user. The RIS is composed of $M$ reflecting elements. By utilizing varactor diodes or PIN diodes, each element can shift the phase of the reflected signal. We show two examples of the structure of the element in Fig.\ref{fig:Channel_model}. A tunable phase shift to the reflected signal is achieved by varying the bias voltage of the diode.

\begin{figure}[!t]
	\centering
	\includegraphics[width=3in]{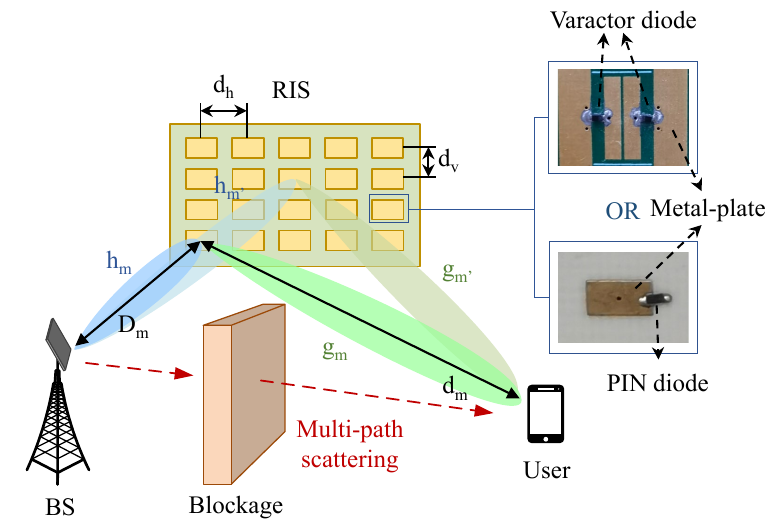}
	% where an .eps filename suffix will be assumed under latex, 
	% and a .pdf suffix will be assumed for pdflatex; or what has been declared via 
	\DeclareGraphicsExtensions.
	\caption{An RIS-based wireless communication system.}
	\label{fig:Channel_model}
\end{figure}

For ease of exposition, we assume that there is no line-of-sight (LoS) path between the BS and the user. However, the following derivation can be easily extended for scenarios with LoS paths.
The signal received at the user is expressed as
\begin{equation}\label{eq1}
y = \sqrt {P_t} {\textbf{f}^H}\bm{\Phi }\textbf{h}s + n, 
\end{equation}
where $n\sim\mathcal{CN}\left(0{\rm{,}}\sigma^2 \right)$ is the additive white Gaussian noise (AWGN) with zero mean and variance of $\sigma ^2$, $P_t$ is the transmit power, $s$ is the transmitted signal with $\left | s^{2}  \right | = 1 $, and $\bm{\Phi } \buildrel \Delta \over={\rm diag}({A_1}{e^{ - j{\theta _1}}},\cdots,{A_M}{e^{ - j{\theta _M}}}) \in {\mathbb{C}^{M \times M}}$ denotes the reflection coefficient matrix of the RIS, where ${A_m}$ and ${\theta _m}$ are the reflection amplitude and phase shift on the incident signal, respectively. $\textbf{h}={\left[ {{h_1},\cdots,{h_M}} \right]^T \in {\mathbb{C}^{M \times 1}}}$ denotes the channel from the BS to the RIS, and $\textbf{f}={\left[ {{f_1},\cdots,{f_M}} \right]^T \in {\mathbb{C}^{M \times 1}}}$ denotes the channel from the RIS to the user.

Although the real direct path between the BS and the user is blocked, the LoS components exist in practical implementation due to the directed reflection of RIS. For this reason, the Rician fading is used to model the channels between the BS and the RIS, as well as the RIS and the user, as signified by $\textbf{h}$ and $\textbf{f}$, which are written as
\begin{equation}\label{eq2}
	\mathbf{h}= \sqrt {\frac{{K_1}}{{K_1} + 1}}\mathbf{\bar h}+\sqrt {\frac{1}{{K_1} + 1}}\mathbf{\tilde h },
\end{equation}
and
\begin{equation}\label{eq3}
	\mathbf{f}= \sqrt {\frac{{K_2}}{{K_2} + 1}}\mathbf{\bar f}+\sqrt {\frac{1}{{K_2} + 1}}\mathbf{\tilde f },
\end{equation}
where $\mathbf{\bar h}$ and $\mathbf{\bar f}$ are the LoS components of each channel; $\mathbf{\tilde h }$ and $\mathbf{\tilde f }$ are the non-LoS (NLoS) components; ${K_1}$ and ${K_2}$ are Rician K-factors of $\textbf{h}$ and $\textbf{f}$, respectively.

Since the distances between the BS and the RIS, as well as the RIS and the user, are significantly greater than the distances between any two RIS elements, we assume that the path loss of the BS-RIS link and the RIS-user link via different RIS elements is identical. The reflected LoS components of each channel via the $m$-th RIS element are denoted by \cite{Channel}
\begin{align}\label{eq4}
	\begin{split}
		\mathbf{\bar h } = &\sqrt {G_a{D_1^{-\alpha }}} \cdot \left[ e^{-j\frac{2\pi}{\lambda }D_1} {\rm{,}} \cdots {\rm{,}} 
	 e^{-j\frac{2\pi}{\lambda }D_m} {\rm{,}}\cdots {\rm{,}} e^{-j\frac{2\pi}{\lambda }D_M} \right]^T ,
	\end{split}
\end{align}
and
\begin{align}\label{eq5}
	\begin{split}
		\mathbf{\bar f } = &\sqrt {d_1^{-\alpha }}\left[ e^{-j\frac{2\pi}{\lambda }d_1} {\rm{,}}\cdots {\rm{,}} 
		e^{-j\frac{2\pi}{\lambda }d_m} {\rm{,}}\cdots {\rm{,}}e^{-j\frac{2\pi}{\lambda }d_M} \right]^T ,
	\end{split}
\end{align}
where $\alpha$ is the path loss factor, ${G_a}$ is the antenna gain, and $\lambda$ is the wavelength of the signal. $D_m$ and $d_m$ are the distances between the BS and the $m$-th RIS element and between the $m$-th RIS element and the user, respectively, as illustrated in Fig.\ref{fig:Channel_model}. During a channel coherent interval, the LoS components are constant, whereas elements of the NLoS components $\tilde{h}$ and $\tilde{f}$ are $i.i.d.$ (independent and identically distributed) complex Gaussian random variables\cite{Rician}.
The NLoS components of each channel are respectively denoted by
\begin{equation}\label{eq6}
	\mathbf{\tilde h} =L({D_1})\left[{g_1} {\rm{,}}\cdots{\rm{,}} {g_m}{\rm{,}} \cdots{\rm{,}}{g_M} \right]^T,
\end{equation}
and
\begin{equation}\label{eq7}
	\mathbf{\tilde f} =L({d_1})\left[ {b_1} {\rm{,}}\cdots{\rm{,}} {b_m}{\rm{,}} \cdots{\rm{,}} {b_M} \right]^T,
\end{equation}
where $L( \cdot )$ is the channel gain of the NLoS components, ${{g_m}\sim\mathcal{CN}\left(0{\rm{,}}1 \right)}$ and ${{b_m}\sim\mathcal{CN}\left(0{\rm{,}}1 \right)}$ denote the small-scale NLoS components.

We then derive the analytical expression for the maximum received power of the system. The instantaneous received power is given by
\begin{equation}\label{eq9}
	P_r =P_t{\left\| {{\mathbf{f}^H}\mathbf{\Phi}\mathbf{h}} \right\|^2},
\end{equation}

The instantaneous received power is a random variable.
The long-term average received power (LARP) $\Gamma$ is denoted by
\begin{equation}\label{eq10}
	\Gamma = \mathbb{E}\left\{P_r\right\} =  P_t \mathbb{E}\left\{{{{\left\| {{\mathbf{f} ^H}\mathbf{\Phi h}}\right\|}^2}} \right\}.
\end{equation}

To have a deeper understanding of the LARP $\Gamma$, we provide another form in the following proposition.
\begin{proposition}
	The LARP $\Gamma$ is given by
	\begin{align}\label{eq11}
		\begin{split}
			&\Gamma = \kappa _\text{NLoS}\sum\limits_m {A_{{m}}^2}+ \kappa _\text{LoS} \sum\limits_{m,m'} {{A_m}{A_{m'}}{e^{ - j[{\phi _m} - {\phi _{m'}} + {\theta _m} - {\theta _{m'}}]}}},
		\end{split}
	\end{align}	
	where ${\phi _m} = \frac{{2\pi }}{\lambda }({D_m} + {d_m})$ denotes the total phase shift induced by the LoS components of each channel; $\kappa _\text{LoS}$ and $\kappa_\text{NLoS}$ are constants defined as
	\begin{align}
		&\kappa _\text{LoS} = \frac{{{K_1}{K_2}{\eta _\text{LoS}}}}{{({K_1} + 1)({K_2} + 1)}},\\
		&\kappa_\text{NLoS} = \frac{K_1\eta _\text{NLoS1}+K_2\eta _\text{NLoS2}+\eta _\text{NLoS3}}{{({K_1} + 1)({K_2} + 1)}},
	\end{align}
	where $\eta _\text{LoS}$, $\eta _\text{NLoS1}$, $\eta _\text{NLoS2}$ and $\eta _\text{NLoS1}$ are constants related to the path loss of the channels, which are defined as 
	\begin{align}
		{\eta _\text{LoS}} &= {\sqrt {D_1^{-\alpha }d_1^{-\alpha }}}{P_t} {G_a} ,\\
		{\eta _\text{NLoS1}} &= {\sqrt {{G_a}D_1^{-\alpha }}}{P_t}{L( d_1 )},\\
		{\eta _\text{NLoS2}} &= {\sqrt {{G_a}d_1^{-\alpha }}}{P_t}{L( D_1 )},\\
		{\eta _\text{NLoS3}} &= {P_t}{L( D_1 )}{L( d_1 )}.
	\end{align}
	
\end{proposition}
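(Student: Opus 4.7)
\emph{Plan.} I would derive the formula by directly expanding the norm-squared in (\ref{eq10}) as a double sum over the element indices, reducing the problem to computing scalar second-order correlations of the Rician channels (\ref{eq2})-(\ref{eq3}), and then regrouping the resulting four families of terms.

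\emph{Step 1 (expansion).} Because $\bm{\Phi}$ is diagonal, the inner product collapses to $\mathbf{f}^H \bm{\Phi}\mathbf{h} = \sum_m A_m e^{-j\theta_m} f_m^* h_m$, so
\begin{equation*}
\left\|\mathbf{f}^H \bm{\Phi}\mathbf{h}\right\|^2 = \sum_{m,m'} A_m A_{m'} e^{-j(\theta_m - \theta_{m'})} (f_m^* f_{m'})(h_m h_{m'}^*).
\end{equation*}
Pulling $P_t \mathbb{E}\{\cdot\}$ inside the finite sum, and using that $\mathbf{h}$ and $\mathbf{f}$ are independent, reduces the proposition to evaluating the two scalar correlations $\mathbb{E}\{h_m h_{m'}^*\}$ and $\mathbb{E}\{f_m^* f_{m'}\}$ separately and multiplying them.

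\emph{Step 2 (Rician correlations).} I would substitute the decompositions (\ref{eq2})-(\ref{eq3}) into each correlation. The LoS components $\bar{h}_m,\bar{f}_m$ are deterministic while the NLoS components $\tilde{h}_m,\tilde{f}_m$ are zero-mean, so the LoS-NLoS cross terms drop out. Since $\{g_m\},\{b_m\}$ are i.i.d.\ $\mathcal{CN}(0,1)$, the surviving NLoS autocorrelations are Kronecker-delta: $\mathbb{E}\{\tilde{h}_m \tilde{h}_{m'}^*\}=L(D_1)^2\delta_{m,m'}$ and analogously for $\tilde f$. This yields
\begin{equation*}
\mathbb{E}\{h_m h_{m'}^*\} = \frac{K_1}{K_1+1}\,\bar{h}_m \bar{h}_{m'}^* + \frac{L(D_1)^2}{K_1+1}\,\delta_{m,m'},
\end{equation*}
and an identical formula for $\mathbb{E}\{f_m^* f_{m'}\}$ with $K_2,L(d_1),\bar f$.

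\emph{Step 3 (assembly).} Multiplying the two correlations produces four groups of terms weighted by $K_1K_2$, $K_1$, $K_2$, and $1$, divided by $(K_1+1)(K_2+1)$. For the $K_1K_2$ group I would use the identity $\bar{h}_m \bar{h}_{m'}^* \bar{f}_m^* \bar{f}_{m'} = G_a D_1^{-\alpha} d_1^{-\alpha} e^{-j(\phi_m - \phi_{m'})}$ with $\phi_m = 2\pi(D_m+d_m)/\lambda$, which, once summed against $A_m A_{m'} e^{-j(\theta_m-\theta_{m'})}$ and multiplied by $P_t$, reproduces the $\kappa_{\text{LoS}}$ double sum in (\ref{eq11}). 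The remaining three groups each carry a $\delta_{m,m'}$ and therefore collapse to $\sum_m A_m^2$ with coefficients $P_t G_a D_1^{-\alpha} L(d_1)^2$, $P_t d_1^{-\alpha} L(D_1)^2$, and $P_t L(D_1)^2 L(d_1)^2$, which I would identify with $\eta_{\text{NLoS1}}$, $\eta_{\text{NLoS2}}$, $\eta_{\text{NLoS3}}$ (up to the notational conventions used in the statement) and combine into $\kappa_{\text{NLoS}}$.

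\emph{Main obstacle.} The derivation is essentially careful bookkeeping rather than a deep step; what must be handled carefully is (i) that every LoS-NLoS cross term in the Rician expansion vanishes, thanks to the zero mean of $\tilde{h},\tilde{f}$ and the independence of the two channels, and (ii) that the LoS amplitudes $|\bar{h}_m|^2 = G_a D_1^{-\alpha}$ and $|\bar{f}_m|^2 = d_1^{-\alpha}$ do not depend on the index $m$ under the far-field assumption stated just before (\ref{eq4}). It is the latter that allows the $m=m'$ contributions to factor cleanly as $\sum_m A_m^2$ times a single scalar $\kappa_{\text{NLoS}}$, rather than as an $m$-dependent weighted sum.
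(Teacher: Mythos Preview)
Your proposal is correct and is essentially the same argument as the paper's Appendix~\ref{Append: Prop1}: both rely on the zero-mean of the NLoS components to kill all LoS--NLoS cross terms, the i.i.d.\ assumption to reduce NLoS autocorrelations to Kronecker deltas, and the far-field equal-path-loss assumption to pull out common amplitude factors. The only cosmetic difference is that the paper first groups $\mathbf{f}^H\bm{\Phi}\mathbf{h}$ into four vector-level summands $x_1,\dots,x_4$ (pure LoS, two mixed, pure NLoS) and then computes each $\mathbb{E}\{\|x_i\|^2\}$ separately, whereas you expand entrywise from the start and factor the scalar correlations $\mathbb{E}\{h_m h_{m'}^*\}\,\mathbb{E}\{f_m^* f_{m'}\}$ by independence; the two orderings yield the same four families of terms.
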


\begin{proof}
	See Appendix \ref{Append: Prop1}.
\end{proof}

We aim to maximize the received power at the user by optimizing the response of each RIS element. The problem is formulated as
\begin{alignat}{3}
&& \mathrm{(P0)} :&\max_{\mathbf{\Phi}} {\quad}& {P_t}&\mathbb{E}\left\{{{{\left\| {{\mathbf{f} ^H}\mathbf{\Phi h}}\right\|}^2}} \right\}\\
&& &s.t.{\quad}&{\phi _m}& = {A_m}{e^{ - j{\theta _m}}}, m=1,\cdots,M,\nonumber\\
&& &&0 &\le {\theta _m} \le 2\pi, m=1,\cdots,M.\nonumber
\end{alignat}

The maximum LARP is obtained when ${\phi _m} - {\phi _{m'}} + {\theta _m} - {\theta _{m'}} = 0$ and ${{{{A}}_m} = 1}$ for any $m$ and $m'$. In other words, the optimal phase shifts with continuous value $\theta _m^*$ on the $m$-th RIS element should satisfy the following constraint:
\begin{equation}\label{eq13}
		\theta _m^* + {\phi _m} = C,
\end{equation}
where $C$ is an arbitrary constant. When the phase shift of each RIS element satisfies (\ref{eq13}) and all elements share the same amplitude value ${{{A}}_m} = 1$, the maximum LARP is obtained:
\begin{align}\label{eq14}
		\begin{split}
			 \Gamma _{\max }= &\mathop {\max }\limits_{ {\mathbf{\Phi } } } {P_t}\mathbb{E}\left\{{{{\left\| {{\mathbf{f} ^H}\mathbf{\Phi h}}\right\|}^2}} \right\}\\
			= &\kappa _\text{NLoS}M + \kappa _\text{LoS} M^2.
		\end{split}
\end{align}
${\Gamma _{\max }}$ will, therefore, serve as an upper bound of the received power.

According to (\ref{eq14}), the maximum LARP increases with the Rician K-factors of the channels.
The relationship between RIS size $M$ and the maximum LARP for different values of $K_1$ and $K_2$ is as follows.
When considering a pure LoS channel, i.e., ${K_1},{K_2} \to \infty $, an asymptotic squared maximum LARP of $O\left( {{M^2}} \right)$ can be achieved.
When considering a Rayleigh channel, i.e., ${K_1},{K_2} \to 0 $, an asymptotic linear LARP of $O\left( {{M}} \right)$ can be achieved.

\section{Performance Analysis for Practical System}\label{Sec:Performance}

Since continuous phase shifts are difficult to realize due to hardware limitations, discrete phase shifts are usually employed in practical systems. In this section, we will introduce a realistic discrete phase shifting model, and discuss how the employment of realistic discrete phase shifters affects the maximum LARP of an RIS-aided system.

\subsection{The realistic discrete phase shifting model}
We first assume that the RIS has a phase shifting capability of $\omega$, which means it can generate a phase shift covering the range from $0$ to $\omega$, and that the phase shift is $k$-bit uniformly quantized.
In other words, we control the programmable components such as varactor diodes or PIN diodes, to generate ${{\rm{2}}^k}$ patterns of the reflection coefficients, which are denoted by:
 \begin{equation}\label{eq18}
 	{\Phi _i} = A_{\theta _{{i}}}{e^{ - j{\theta _{{i}}}}},\quad i = 1,2,\cdots,{2^k},
 \end{equation}
where $A_{\theta _{{i}}}$ is the amplitude when the phase shift is ${{\theta _{{i}}}}$.

To investigate the effect of phase shifting capability on performance, RIS systems are divided into two categories: systems with sufficient phase shifting capability and systems with insufficient phase shifting capability.
We consider phase shifting capability to be sufficient when it can meet the quantification requirements. For example, the phase shifting capability in a 1-bit quantized system surpasses 180° and in a 2-bit quantized system exceeds 270°, etc.
Otherwise, the RIS system is called an insufficient phase shifting capability system.

For systems having sufficient phase shift capability, i.e., $\omega\ge\frac{{2^k} - 1}{2^k} 2\pi$, the uniform phase interval is $\Delta \theta  = \frac{{2\pi }}{{{2^k}}}$. For systems with insufficient phase shift capability, i.e., $\omega  < \frac{{{2^k} - 1}}{{{2^k}}} 2\pi$, the uniform phase interval after quantization is $\Delta \theta  = \frac{\omega }{{{2^k} - 1}}$. ${{\theta _{{i}}}}$ is given by
\begin{equation}\label{eq19}
	\theta _i =\left\{
	\begin{aligned}
		&\frac{\omega}{{2^k} - 1} \cdot \left( i - 1 \right),\quad \omega  < \frac{{2^k} - 1}{2^k} \cdot 2\pi; \\
		&\frac{{2\pi }}{{{2^k}}} \cdot \left( i - 1\right),\quad \omega  \ge \frac{{2^k} - 1}{2^k} \cdot 2\pi .\\
	\end{aligned}
	\right.
\end{equation}  

We will analyze the performance of an RIS-aided system employed the realistic phase shifter with or without the assumption of ideal reflection, respectively, in the following subsection.

\subsection{Analysis on the realistic discrete phase shifting model with uniform amplitude}

In this subsection, we will discuss the impact of limited phase shift range on the maximum LARP, using the ideal reflection model with uniform reflective amplitude, which means each discrete phase shift $\theta _{i}$ corresponds to an amplitude of $A_{\theta_i}=1$. In this scenario, the problem $\mathrm{ (P0)}$ of maximizing the received power is transformed to
\begin{alignat}{3}
&& \mathrm{(P1)} :&\max_{\mathbf{\hat{\Phi}}} {\quad}& {P_t}&\mathbb{E}\left\{{{{\left\| {{\mathbf{f} ^H}\mathbf{\Phi h}}\right\|}^2}} \right\}\\
&& &s.t.{\quad}&{\hat{\phi }_m}& = {e^{ - j{\hat{\theta} _m}}}, m=1,\cdots,M,\label{P1 constr1}\\
&& &&{\hat{\theta}_m} &\in (\theta_1,\cdots,\theta_i,\cdots,\theta_{2^k}), m=1,\cdots,M.\label{P1 constr2}
\end{alignat}

Under the constraints in (\ref{P1 constr1}) and (\ref{P1 constr2}), the LARP of the RIS-aided system is written as
\begin{align}\label{eqP1}
	\begin{split}
		\Gamma = &\kappa _\text{NLoS}M + \kappa _\text{LoS}\sum\limits_{m,m'} {e^{-j[({\phi _m}+{\theta _m})-({\phi _{m'}}+{\theta _{m'}})]}}. \\
	\end{split}
\end{align}

To obtain the maximum LARP in this scenario, for the $m$-th RIS element, we select the discrete phase shift which is closest to the optimal one $\theta _m^*$ as given in (\ref{eq13}), and denote it by ${\hat \theta _m}$. The phase errors resulting from discrete phase shifts are defined as
\begin{equation}\label{eq15}
	{\delta _m} = \theta _m^* - {\hat \theta _m}, m = 1,\cdots,M.
\end{equation}

The maximum LARP $\hat \Gamma_{\rm{max}}$ with discrete phase shifts is given by
\begin{align}\label{eq16}
	\begin{split}
		&\hat \Gamma_{\rm{max}}=\kappa _\text{NLoS}M +\kappa _\text{LoS}\sum\limits_{m,m'} {e^{-j\left[C + {\delta _m} - C - {\delta _{m'}}\right]}} \\
		&=\kappa _\text{NLoS}M+ \kappa _\text{LoS} \sum\limits_{m,m'}{\left({\sin{\delta_m}\sin{\delta_{m'}} + \cos{\delta_m}\cos{\delta_{m'}}}\right)} .
	\end{split}
\end{align}

Since ${\phi _m}$ is jointly determined by the wavelength of the incident signal, the distance between the BS and the $m$-th RIS element, and the distance between the $m$-th element of the RIS and the user, and that $\theta _m^* + {\phi _m} = C$, we assume that the optimal phase shift $\theta _m^*$ in the practical system is uniformly distributed in $[0,2\pi )$.
The following theorem shows the expectation of the maximum LARP $\mathbb{E}(\hat \Gamma_{\rm{max}})$ in RIS-aided wireless communication systems in this scenario.

\begin{theorem}\label{ThmIdl}
	Assuming that all elements of RIS have the same reflection amplitude $A$, the closed-form expression for the expectation of the maximum LARP at the user is given by 
	\begin{align}\label{eq17}
		&\mathbb{E} \left (\hat \Gamma_{\rm{max}} \right ) =\nonumber\\
		&\left\{
		\begin{aligned}
			&\kappa _\text{NLoS}M +4\kappa _\text{LoS}M^2 \\
			&\times\left[{P_1}\sin b + {P_2}(\sin a - \sin b) \right]^2, \quad \omega<\frac{2^{{k}}-1}{2^{{k- 1}}}\pi ;\\
			&\kappa _\text{NLoS}M + \kappa _\text{LoS}M^2\cdot \frac{2^{{2k}}}{\pi ^2}{\sin ^2}\frac{\pi}{2^{{k}}}, \quad \omega  \ge \frac{2^{{k}}-1}{2^{{k- 1}}}\pi.
		\end{aligned}
		\right.
	\end{align}
	where $a = \pi - \frac{\omega }{2}$, $b  = \frac{\omega }{{2\left({{2^{{k}}} - 1} \right)}}$, ${P_1} = \frac{{{2^{{k}}}}}{{2\pi }}$, ${P_2} = \frac{1}{{2\pi }} $.
\end{theorem}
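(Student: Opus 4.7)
The plan is to start from the explicit form (\ref{eq16}) of $\hat\Gamma_{\rm{max}}$ and rewrite its double sum as the squared magnitude of a phasor sum,
\begin{equation*}
\sum_{m,m'}\bigl(\sin\delta_m\sin\delta_{m'}+\cos\delta_m\cos\delta_{m'}\bigr) = \Bigl|\sum_{m=1}^M e^{j\delta_m}\Bigr|^2,
\end{equation*}
so that the expectation becomes the second moment of a random walk of $M$ unit phasors. Under the standing assumption that $\theta_m^*$ is uniform on $[0,2\pi)$ and independent across RIS elements, the quantization errors $\delta_m$ are i.i.d., and independence gives
\begin{equation*}
\mathbb{E}\Bigl|\sum_m e^{j\delta_m}\Bigr|^2 = M+M(M-1)\bigl[(\mathbb{E}\cos\delta)^2+(\mathbb{E}\sin\delta)^2\bigr]\approx M^2\bigl[(\mathbb{E}\cos\delta)^2+(\mathbb{E}\sin\delta)^2\bigr]
\end{equation*}
to leading order in $M$, which furnishes the $\kappa_\text{LoS}M^2$ prefactor in the statement. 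The theorem therefore reduces to computing $\mathbb{E}\sin\delta$ and $\mathbb{E}\cos\delta$ for the quantization rule induced by $\omega$ and $k$.

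I would obtain the pdf of $\delta$ by identifying, on the unit circle, the Voronoi cell of each quantized phase $\hat\theta_i$ under the chord distance $|e^{-j\theta}-e^{-j\hat\theta_i}|$, and then invoking uniformity of $\theta_m^*$. In the sufficient-capability case every cell has half-width $\pi/2^k$, so $\delta$ is uniform on $[-\pi/2^k,\pi/2^k]$; by oddness $\mathbb{E}\sin\delta=0$, and a direct integration gives $\mathbb{E}\cos\delta=(2^k/\pi)\sin(\pi/2^k)$, whose square yields the second branch of (\ref{eq17}). In the insufficient case the available phases $\{0,2b,\dots,\omega\}$ leave an arc of length $2\pi-\omega$ uncovered, which the two boundary cells at $0$ and $\omega$ split symmetrically; their outer half-widths are therefore $a=\pi-\omega/2$, while the $2^k-2$ interior cells keep half-width $b=\omega/(2(2^k-1))$. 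Folding this against $\theta_m^*\sim\mathcal{U}[0,2\pi)$ yields the piecewise-constant density
\begin{equation*}
f_\delta(\delta)=\begin{cases} P_1=2^k/(2\pi), & \delta\in[-b,b],\\ P_2=1/(2\pi), & \delta\in[-a,-b)\cup[b,a), \end{cases}
\end{equation*}
(normalization reduces to $a+(2^k-1)b=\pi$, a useful sanity check). Symmetry again gives $\mathbb{E}\sin\delta=0$, and
\begin{equation*}
\mathbb{E}\cos\delta=P_1\!\int_{-b}^{b}\!\cos\delta\,d\delta+2P_2\!\int_{b}^{a}\!\cos\delta\,d\delta=2\bigl[P_1\sin b+P_2(\sin a-\sin b)\bigr];
\end{equation*}
squaring and multiplying by $4\kappa_\text{LoS}M^2$ reproduces the first branch of (\ref{eq17}).

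The main obstacle I anticipate is the boundary bookkeeping in the insufficient regime: one must use the \emph{circular} (chord-length) distance to locate the midpoint between $\hat\theta_{2^k}=\omega$ and $\hat\theta_1=0$ along the short arc through $2\pi$, and verify under the hypothesis $\omega<\frac{2^{k}-1}{2^{k-1}}\pi$ that $a>b$ so both pieces of $f_\delta$ are non-degenerate. Everything else is routine sine/cosine integration and the large-$M$ approximation already noted.
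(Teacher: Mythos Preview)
Your proposal is correct and lands on the same result as the paper's Appendix~\ref{Append: Theor1}; both arguments rest on the identical piecewise-uniform density for the quantization error $\delta$ (your $f_\delta$ is exactly the paper's equations~(\ref{eq36})--(\ref{eq35})). The difference is organizational rather than conceptual. The paper keeps $\mathbb{E}[\sin\delta_m\sin\delta_{m'}+\cos\delta_m\cos\delta_{m'}]$ as a double integral over the joint density and, in the insufficient-capability case, writes it out as the nine-term sum~(\ref{eq37}) before simplifying. Your phasor identity $\sum_{m,m'}\varphi=\bigl|\sum_m e^{j\delta_m}\bigr|^2$, combined with the symmetry observation $\mathbb{E}\sin\delta=0$, collapses the whole computation to the single integral $\mathbb{E}\cos\delta$; this is shorter and makes the squared structure of the answer transparent from the outset. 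You are also more explicit than the paper about the large-$M$ step that replaces $M+M(M-1)(\mathbb{E}\cos\delta)^2$ by $M^2(\mathbb{E}\cos\delta)^2$: the paper passes over this silently when it writes $\sum_{m,m'}\to M^2\,\mathbb{E}[\cdot]$ in~(\ref{eq34}).
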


\begin{proof}
	See Appendix \ref{Append: Theor1}.
\end{proof}

Theorem \ref{ThmIdl} indicates that the maximum LARP expectation $\mathbb{E}\left(\hat\Gamma_{\rm{max}}\right)$ is determined by the size and topology of the system, the propagation environment, the number of quantization bits $k$ and the phase shift capability $\omega$. We will further show in the next section that the phase shift capability $\omega$ is a key factor for performance.

 \subsection{Analysis on the realistic discrete phase shifting model}
 
 In practical systems, the amplitude response of RIS reflecting elements generally depends on the phase shift value. In this section, we further consider the influence of phase shifting capability on the maximum LARP based on the non-ideal reflection model in which the reflection amplitude varies with the phase shift. In this scenario, based on the constraint of the realistic discrete phase shifting model, the problem of maximizing LARP is written by
\begin{alignat}{3}
&& \mathrm{(P2)} :&\max_{\mathbf{\hat{\Phi}}} {\quad}& {P_t}&\mathbb{E}\left\{{{{\left\| {{\mathbf{f} ^H}\mathbf{\Phi h}}\right\|}^2}} \right\}\label{P2}\\
&& &s.t.{\quad}&{\hat{\phi}_m}& = {A_{\hat{\theta} _m}}{e^{ - j{\hat{\theta} _m}}}, m=1,\cdots,M,\label{P2 constr1}\\
&& &&{\hat{\theta} _m} &\in (\theta_1,\cdots,\theta_i\cdots,\theta_{2^k}), m=1,\cdots,M.
\end{alignat}

Although the objective function of (P2) is convex in this scenario, solving (P2) is difficult due to the non-convex constraint in (\ref{P2 constr1}). When using the non-ideal reflection model, the reflection design should strike an appropriate balance between the amplitude and phase of the reflected signal. To solve this problem, we propose a low-complexity algorithm based on vector quantization of reflection coefficients to find an approximate solution to (P2).

We begin by defining a quantization loss function $\mathcal{L}$ to assess the difference between the quantified and desired reflection coefficients:
\begin{equation}\label{eq20}
\mathcal{L}(\theta_i,m) = 1 - {A_{\theta_i}}\cos ({\theta _{{i}}} - \theta^*_m ),
\end{equation}
which can be thought of as the difference between the desired reflection coefficient and the quantified reflection coefficient projected onto it. Here, the desired reflection coefficient denotes the reflection coefficient that maximizes the LARP without any constraints, with a reflection amplitude $A_m = 1$ and a phase shift $\theta^*_m$ given in (\ref{eq13}). We will refer to $\theta^*_m$ as the expected phase shift in the remainder of this paper. The optimization problem of the reflection coefficient of the $m$-th element is then simplified to (P3):
\begin{alignat}{3}
&& \mathrm{(P3)} :&\max_{\hat{\phi }_m} {\quad}& a_\text{NLoS}&A_{\hat{\theta} _m}^2 + {a_\text{LoS}}{A_{\hat{\theta }_m}}\cos (\hat{\theta }_m - \theta^*_m)\label{P3}\\
&& &s.t.{\quad}&{\hat{\theta} _m} &\in (\theta_1,\cdots,\theta_i,\cdots,\theta_{2^k}), m=1,\cdots,M.\label{P3 constr1}
\end{alignat}
where $a_\text{NLoS}$ and $a_\text{LoS}$ denote the constants related to the channel path loss as
\begin{align}
    {a_\text{NLoS}} &= \frac{K_1\eta _\text{NLoS1}+K_2\eta _\text{NLoS2}+\eta _\text{NLoS3}}{{({K_1} + 1)({K_2} + 1)}},\\
    {a_\text{LoS}} &=M\bar{A}  \frac{{{K_1}{K_2}{\eta _\text{LoS}}}}{{({K_1} + 1)({K_2} + 1)}}.
\end{align}
Here, $\bar{A} = \sum{{A_{\hat{\theta} _i}^2}}/\sum{{A_{\hat{\theta} _i}}}$ is a constant related to the power loss caused by the reflection coefficient of RIS, which is derived based on the characteristic that the optimized phase shift is usually more concentrated towards the phase shift with the larger reflective amplitude\cite{PracticalModelWuqq}.

Note that (P3) can be solved by the exhaustive search method. Since in practical systems, the number of control bits $k$ is generally not greater than 3, the complexity of this method is not exceptionally high.
Moreover, since each RIS element can generate the same patterns of reflection coefficients, any RIS element with the identical expected phase shift has the same optimal reflection coefficient when solving the problem (P3).
Therefore, we may build a look-up table by calculating the expected phase shift range $c_i$ for each quantized reflection coefficient. This table provides the optimized reflection coefficients for any possible value of expected phase shifts. In other words, the reflection coefficients are quantized using the look-up table, which will further reduce the computational complexity of solving (P3).  
Below we will show how this table is developed. For notational simplicity, we omit the subscript $m$ of the expected phase $\theta_m^*$.

Firstly, substitute each quantized reflection coefficient into the objective function (\ref{P3}) and obtain a series of equations as follows:
\begin{align}\label{Eq:objFuns}
	&{f_1}(\theta ) = {a_\text{NLoS}}A_{{\theta _1}}^2 + {a_\text{LoS}}{A_{{\theta _1}}}\cos ({\theta _1} - \theta),\nonumber\\
	&\cdots\nonumber\\
	&{f_i}(\theta) = {a_\text{NLoS}}A_{{\theta _i}}^2 + {a_\text{LoS}}{A_{{\theta _i}}}\cos ({\theta _i} - \theta),\\
	&\cdots\nonumber\\
	&{f_{2^k}}(\theta) = {a_\text{NLoS}}A_{{\theta _{{2^k}}}}^2 + {a_\text{LoS}}{A_{{\theta _{{2^k}}}}}\cos ({\theta _{{2^k}}} - \theta).\nonumber
\end{align}

 \begin{figure}[!t]
 	\centering
 	\includegraphics[width=2.5in]{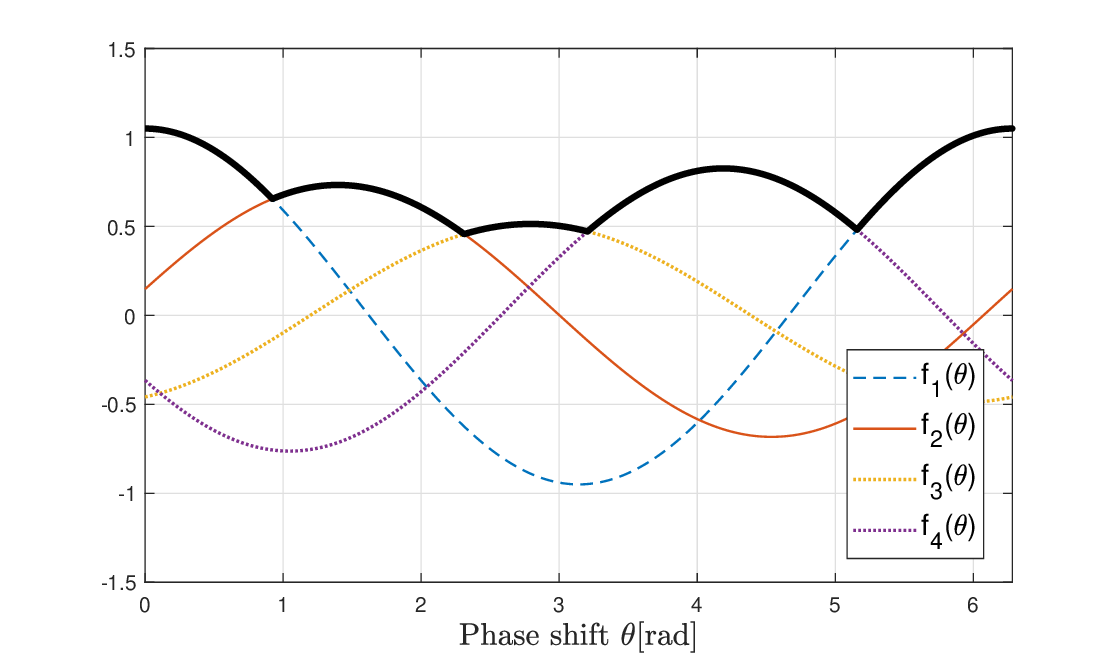}
 	% where an .eps filename suffix will be assumed under latex, 
 	% and a .pdf suffix will be assumed for pdflatex; or what has been declared via 
 	\DeclareGraphicsExtensions.
 	\caption{An example of a set of curves of a 2-bit quantized RIS-aided system.}
 	\label{fig_CurveCluster}
 \end{figure}

As shown in Fig. \ref{fig_CurveCluster}, the solution to problem (P3) when the expected phase $\theta^* = \theta$ is the reflection coefficient corresponding to the envelope of the set of curves at phase shift $\theta$.
As a result, determining the expected phase range $c_i$ is equivalent to finding among the equations (\ref{Eq:objFuns}) that makes the objective function maximum, and calculating the value range of $\theta$.

In order to solve this problem, we can start by seeking the intersection of every two curves in [0,2$\pi$). For instance, for ${f_i}(\theta )$ and ${f_{i'}}(\theta )$, we have
\begin{align}\label{eq fifi1}
    &{a_\text{NLoS}}A_{{\theta _i}}^2 + {a_\text{LoS}}{A_{{\theta _i}}}\cos ({\theta _i} - \theta )\nonumber \\ 
    &= {a_\text{NLoS}}A_{{\theta _{i'}}}^2 + {a_\text{LoS}}{A_{{\theta _{i'}}}}\cos ({\theta _{i'}} - \theta ).
\end{align}
The above formula can be converted into
\begin{align}\label{eq fifi2}
    \sqrt {{C_{\sin}^2} + {C_{\cos}^2}}\sin (\theta  \pm \vartheta )= \frac{{{a_\text{NLoS}}(A_{{\theta _{i'}}}^2 - A_{{\theta _i}}^2)}}{{{a_\text{LoS}}}},
\end{align}
where $C_{\sin}$ and $C_{\cos}$ are constants, which are determined by
\begin{align}\label{eq csincos}
	&C_{\sin}={A_{{\theta _i}}}\sin {\theta _i} - {A_{{\theta _{i'}}}}\sin {\theta _{i'}},\\ 
	&C_{\cos}={A_{{\theta _i}}}\cos {\theta _i} - {A_{{\theta _{i'}}}}\cos {\theta _{i'}}.
\end{align}
The quadrant of $\vartheta$ is defined by the sign of $C_{\sin}$ and $C_{\cos}$, and $\vartheta$ is defined by $\tan\vartheta = {C_{\cos}}/{C_{\sin}}$. Taking $C_{\sin}>0$ as an example, we solve the intersection of ${f_i}(\theta )$ and ${f_{i'}}(\theta )$ as
\begin{align}\label{eq thetaii}
{\theta _{ii'}} = \arcsin \frac{{{a_\text{NLoS}}(A_{{\theta _{i'}}}^2 - A_{{\theta _i}}^2)}}{{{a_\text{LoS}}\sqrt {{C_{\sin }}^2 + {C_{\cos }}^2} }} - \arctan \frac{C_{\cos}}{C_{\sin}}
\end{align}

Then we need to determine if the following equality holds:
\begin{align}\label{eq thetaii_max}
    {f_i}(\theta _{ii'}) = \max [{f_1}(\theta _{ii'}),\cdots,{f_i}(\theta _{ii'}),\cdots,{f_{{2^k}}}(\theta _{ii'})].
\end{align}
If it does, we call it a valid intersection. Following that, we compare the values of ${f_i}'(\theta _{ii'})$ and ${f_{i'}}'(\theta _{ii'})$. If ${f_i}'(\theta _{ii'})>{f_{i'}}'(\theta _{ii'})$, the phase shift range between $\theta _{ii'}$ and the next valid intersection belongs to $c_i$, and the phase shift range between the last valid intersection and $\theta _{ii'}$ belongs to $c_{i'}$, and vice versa. By calculating all the valid intersections between the curves and comparing the derivatives of the corresponding curves at valid intersections, the expected phase range corresponding to each quantized state is obtained.

Once the phase range $c_i$ is obtained,
we may easily compute the quantized coefficient with the expected phase shift, which will greatly reduce the computational complexity.
The overall procedure to solve (P2) is summarized in Algorithm \ref{alg:A} which is referred to as group-based query algorithm.

Next, we will give a brief discussion on the computational complexity of the group-based query algorithm.
The presented algorithm can be divided into two parts: grouping and querying. 
The grouping part mainly contains comparisons and inverse trigonometric functions. The approximate complexity of this part is ${o_1} = O\left( {rt} \right) + O\left( {{2^k}r} \right) + O\left( r \right)$, where $r = 2^k(2^k-1)$ is the number of iterations of the loop and $t$ denotes the accuracy for the inverse trigonometric functions. Since $k$ is generally not greater than 3, the complexity of this part is very low.
The approximate complexity of the querying part is ${o_2} = O\left( M \right)$.
Therefore, when $M\gg k$, the total complexity of the algorithm is approximated as ${o_1} + {o_2} \approx O\left( M \right)$. Otherwise, when $k$ is large, then the total complexity of the algorithm is approximated as ${o_1} + {o_2} \approx O\left( {{2^k}r} \right) + O\left( M \right)$.

\begin{algorithm}[t]
\caption{Group-Based Query Algorithm for Solving (P2)}  
\label{alg:A}  
\begin{algorithmic}[1]  
\REQUIRE ${a_\text{LoS}}$, ${a_\text{NLoS}}$, $M$, $\left\{ {\theta _m^*} \right\}_{m = 1}^M$, $\left\{ A_{\theta_i} \right\}_{i = 1}^{2^k}$, $\left\{ \theta_i \right\}_{i = 1}^{2^k}$. 
\ENSURE $\mathbf{\hat{\Phi}}$
\FOR{$i = 1$ to $2^k-1$}
    \FOR{$i' = i+1$ to $2^k$}
    \STATE Solve (\ref{eq fifi1}) by using (\ref{eq fifi2}), and obtain the solution ${\theta_{ii'}}$ in $[0, 2\pi)$.
    \IF{$ {f_i}(\theta _{ii'}) = \max [{f_1}(\theta _{ii'}),\cdots,{f_i}(\theta _{ii'}),\cdots,{f_{{2^k}}}(\theta _{ii'})]$}
    \STATE Compare the values of ${f_i}'(\theta _{ii'})$ and ${f_{i'}}'(\theta _{ii'})$ to get the expected phase range $c_i$ and $c_{i'}$ corresponding for each quantized reflection coefficient.
    \ENDIF
    \ENDFOR
\ENDFOR
\STATE According to all valid intersections and their derivatives, the expected phase range $c_i$ is obtained.
\FOR{$m = 1$ to $M$} 
\STATE Find $\hat{\theta}_m = \theta_{i^*}$ as the solution to (P3) with the expected phase shift range $c_i$.
\STATE ${\hat{\phi}_m} = {A_{\hat{\theta} _m}}{e^{ - j{\hat{\theta}_m}}}$.
\ENDFOR      
\end{algorithmic}  
\end{algorithm}

Below we will analyze the performance of the proposed algorithm and derive the expectation of the maximum LARP $\mathbb{E}\left(\hat\Gamma_{\rm{max}}\right)$ at the user in RIS-aided wireless communication systems.
\begin{theorem}\label{ThmNIdl}
	Assuming that each RIS element can produce $2^k$ discrete reflection coefficients as defined in (\ref{eq18}), the expectation of the maximum LARP $\mathbb{E}\left(\hat\Gamma_{\rm{max}}\right)$ at the user is given by
	\begin{align}\label{eq21}
		\begin{split}
			&\mathbb{E}\left(\hat\Gamma_{\rm{max}}\right)=\frac{M\kappa _\text{NLoS}}{{2{\pi}}}\sum\limits_{i} {\frac{\mu_i}{2\pi}A_{\theta _i}^2 }+\frac{{{M^2}\kappa _\text{LoS}}}{{{4{\pi ^2}}}}\sum\limits_{i,i'}A_{\theta _i}A_{\theta _{i'}} \\
			&\times \int_{{\delta _m} \in {d_i}} {\int_{{\delta _{m'}} \in {d_{i'}}} {(\sin {\delta _m}\sin {\delta _{m'}} + \cos {\delta _m}\cos {\delta _{m'}})d{\delta _m}d{\delta _{m'}}} }. \\
		\end{split}
	\end{align}
	where $d_i$ is the quantization error of the i-th quantized reflection coefficient $\Phi_i$.
\end{theorem}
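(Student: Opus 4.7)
The plan is to start from the closed-form LARP expression in (\ref{eq11}) and evaluate it at the reflection coefficients produced by Algorithm~\ref{alg:A}, then average over the randomness of the expected phase shifts $\theta_m^*$, which by hypothesis are independent and uniform on $[0,2\pi)$. The key observation is that the group-based query algorithm selects $\hat{\theta}_m = \theta_i$ precisely when $\theta_m^*$ lies in the expected-phase range $c_i$, so the joint event ``element $m$ is assigned to $\Phi_i$ with quantization error $\delta_m = \theta_m^* - \hat{\theta}_m \in d_i$'' inherits density $1/(2\pi)$ from the uniform distribution of $\theta_m^*$. This reduces all expectations to integrals over the quantization regions $d_i$.

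The first step is algebraic: using the optimality condition $\phi_m + \theta_m^* = C$ from (\ref{eq13}), I would rewrite $\phi_m + \hat{\theta}_m = C - \delta_m$, so the LoS cross-term phase simplifies to $\phi_m + \hat{\theta}_m - \phi_{m'} - \hat{\theta}_{m'} = -(\delta_m - \delta_{m'})$. Applying the identity $\cos(\delta_m - \delta_{m'}) = \cos\delta_m\cos\delta_{m'} + \sin\delta_m\sin\delta_{m'}$ then decouples the double expectation into a sum of products, mirroring the structure already exploited in the proof of Theorem~\ref{ThmIdl}.

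Next I would take expectations term by term. For the NLoS contribution, $\mathbb{E}[A_{\hat{\theta}_m}^2]$ evaluates to $\sum_i (\mu_i/2\pi) A_{\theta_i}^2$ with $\mu_i = |c_i|$, giving $M\kappa_\text{NLoS}$ times this sum after summing over $m$. For the LoS contribution, independence of $\theta_m^*$ and $\theta_{m'}^*$ (for $m\ne m'$) factors the expectation as
\begin{align*}
\mathbb{E}\!\left[A_{\hat{\theta}_m}\cos\delta_m\right]\mathbb{E}\!\left[A_{\hat{\theta}_{m'}}\cos\delta_{m'}\right] + \mathbb{E}\!\left[A_{\hat{\theta}_m}\sin\delta_m\right]\mathbb{E}\!\left[A_{\hat{\theta}_{m'}}\sin\delta_{m'}\right],
\end{align*}
and each single-element expectation unfolds as $(1/2\pi)\sum_i A_{\theta_i}\int_{d_i}(\cos\delta\text{ or }\sin\delta)\,d\delta$. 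Multiplying two such sums and reassembling the cosine-cosine and sine-sine pieces reproduces the stated double integral $\int_{d_i}\!\int_{d_{i'}}(\sin\delta_m\sin\delta_{m'}+\cos\delta_m\cos\delta_{m'})\,d\delta_m d\delta_{m'}$, carrying an overall $1/(4\pi^2)$ from the two density factors. Summing over $m,m'$ contributes the $M^2\kappa_\text{LoS}$ prefactor once the $O(M)$ diagonal terms are absorbed into the NLoS scaling.

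The main obstacle I anticipate is not any single integral, but justifying the reduction to independent single-element expectations: one must verify that the algorithm's selection rule is local — each $\hat{\theta}_m$ depends only on $\theta_m^*$ and the pre-computed look-up table — so that independence of the $\theta_m^*$'s lifts to independence of the pairs $(\hat{\theta}_m,\delta_m)$. A secondary subtlety is checking that, conditional on $\hat{\theta}_m = \theta_i$, the error $\delta_m$ is uniform on $d_i$ with the correct orientation induced by the decision boundaries identified in (\ref{eq thetaii})–(\ref{eq thetaii_max}); this is immediate from the fact that the map $\theta_m^* \mapsto \delta_m$ is a translation by $-\theta_i$ on each $c_i$, and $\theta_m^*$ is uniform on each region with density $1/(2\pi)$.
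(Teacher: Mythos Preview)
Your proposal is correct and follows essentially the same route as the paper's proof: start from the LARP formula (\ref{eq11}), replace $\phi_m+\hat\theta_m$ by $C-\delta_m$ via (\ref{eq13}), use the uniformity of $\theta_m^*$ to obtain density $1/(2\pi)$ on each region $c_i$ (hence on $d_i=c_i-\theta_i$), and integrate. The paper's version is terser---it writes the joint density $1/(4\pi^2)$ directly without spelling out the independence-and-factor step you describe, and it silently folds the diagonal pairs $m=m'$ into the $M^2$ count rather than isolating them as you do---so your write-up is, if anything, slightly more careful than the original on those two points.
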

\begin{proof}
  See Appendix \ref{Append: Theor2}.
\end{proof} 
 
 The proposed algorithm shows that the expected phase shift range ${c_i}$ is related to the parameters of the RIS-aided system, such as the phase shifting capability $\omega$, the reflective amplitude $A_{\theta_i}$, etc.
 Therefore, whenever we compute the expectation of the maximum LARP $\mathbb{E}\left(\hat\Gamma_{\rm{max}}\right)$ of different systems, we need to recalculate the $c_i$ according to the parameters of the system.
 However, $c_i$ has a simpler solution when the number of bits $k=1$ and the pure LoS path, which means the power of the RIS-reflected signal dominates in the total received power. In other words, the Rician K-factors of the channels ${K_1},{K_2} \to \infty $. In this case, the expectation of the maximum LARP $\mathbb{E}\left(\hat\Gamma_{\rm{max}}\right)$ at the user is shown in Corollary \ref{PropNIdl}.
 
 \begin{corollary}\label{PropNIdl}
 	Assuming that each RIS element can produce 2 discrete reflective coefficients as ${\Phi _1} = A_{\theta_1}{e^{ - j{\theta _1}}} $ and ${\Phi _2}= A_{\theta_2} {e^{ - j{\theta _2}}} $, and that the channels contain pure LoS paths. The optimal phase shift ranges are as follows:
 	\begin{equation}\label{ci}
 		\begin{aligned}
 			&{c_1} \in \left[ {0,{\psi _1}} \right) \cup \left[ {{\psi _2},2\pi } \right),\\
 			&{c_2} \in \left[ {{\psi _1},{\psi _2}} \right),
 		\end{aligned}
 	\end{equation}
    where ${\psi _1} =  - \arctan \frac{{{A_{\theta_2}}\cos {\omega '} - {A_{\theta_1}}}}{{A_{\theta_2}}{\sin {\omega '}}} \in \left[ 0, \frac{\pi}{2} \right)\cup \left( \frac{\pi}{2}, \pi \right)$ , ${\psi _2} = \pi  - \arctan \frac{{{A_{\theta_2}}\cos {\omega '} - {A_{\theta_1}}}}{{A_{\theta_2}}{\sin {\omega '}}} \in \left[ \pi, \frac{3\pi}{2} \right)\cup \left( \frac{3\pi}{2}, 2\pi \right)$.
    
 	The closed-form expression for the expectation of the maximum LARP  $\mathbb{E}\left(\hat\Gamma_{\rm{max}}\right)$ at the user is given by
 	\begin{align}\label{eq22}
 		&\mathbb{E}\left(\hat\Gamma_{\rm{max}}\right)\nonumber \\
 		&=
 		\left\{
 		\begin{aligned}
 			&\frac{{{\eta _\text{LoS}}{M^2}}}{{\pi ^2}}\left[A_{\theta_1}^2 + A_{\theta_2}^2 - 2A_{\theta_1}A_{\theta_2}\cos \omega \right],\omega  < \pi;\\
 			&\frac{{{\eta _\text{LoS}}{M^2}}}{{\pi ^2}}\left[A_{\theta_1}^2 + A_{\theta_2}^2 + 2{A_{\theta_1}}{A_{\theta_2}}\right],\omega  \ge \pi .\\
 		\end{aligned}
 		\right.
 	\end{align}
 \end{corollary}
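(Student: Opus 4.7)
The proof would specialize Theorem~\ref{ThmNIdl} to $k=1$ with $K_1,K_2\to\infty$, a setting in which the group-based query algorithm collapses to solving a single trigonometric equation and every integral admits a closed form.

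First I would determine $c_1$ and $c_2$ by running Algorithm~\ref{alg:A} with only two quantized coefficients. The grouping stage reduces to solving $f_1(\theta)=f_2(\theta)$ once. Because $K_1 K_2$ dominates $K_1$ and $K_2$ in the pure-LoS limit, $a_{\text{NLoS}}/a_{\text{LoS}}\to 0$, so equation (\ref{eq fifi1}) simplifies to $A_{\theta_1}\cos(\theta_1-\theta)=A_{\theta_2}\cos(\theta_2-\theta)$. Setting $\theta_1=0$ and $\theta_2$ as dictated by (\ref{eq19})---either $\omega$ when $\omega<\pi$ or $\pi$ when $\omega\ge\pi$---and expanding the right-hand side yields $\tan\theta = (A_{\theta_2}\cos\theta_2 - A_{\theta_1})/(A_{\theta_2}\sin\theta_2)$. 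Its two roots in $[0,2\pi)$ are exactly $\psi_1$ and $\psi_2=\psi_1+\pi$; comparing $f_1'$ and $f_2'$ at the roots (step~5 of the algorithm) sorts the arcs into $c_1=[0,\psi_1)\cup[\psi_2,2\pi)$ and $c_2=[\psi_1,\psi_2)$, which is the first claim.

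Next I would substitute these ranges into (\ref{eq21}). The pure-LoS limit sends $\kappa_{\text{NLoS}}\to 0$ and $\kappa_{\text{LoS}}\to\eta_{\text{LoS}}$, eliminating the first term. Applying the identity $\sin\delta_m\sin\delta_{m'}+\cos\delta_m\cos\delta_{m'}=\cos(\delta_m-\delta_{m'})$ factorizes the double integral and rewrites the entire double sum as
\begin{equation*}
\Bigl(\sum_{i=1}^{2} A_{\theta_i}\!\int_{d_i}\!\cos\delta\,d\delta\Bigr)^{\!2} + \Bigl(\sum_{i=1}^{2} A_{\theta_i}\!\int_{d_i}\!\sin\delta\,d\delta\Bigr)^{\!2},
\end{equation*}
where $d_i$ is $c_i$ translated so that the quantized phase $\hat\theta_m=\theta_i$ corresponds to $\delta=0$. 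Each one-dimensional integral evaluates to sines and cosines at $\psi_1$ and $\psi_2$; using $\psi_2=\psi_1+\pi$ to cancel antipodal terms reduces the expression to a function of $A_{\theta_1}$, $A_{\theta_2}$, and $\theta_2$. With $\theta_2=\omega$ the $\omega<\pi$ branch emerges, carrying the $-2A_{\theta_1}A_{\theta_2}\cos\omega$ cross term; with $\theta_2=\pi$ one has $\cos\theta_2=-1$, producing the $+2A_{\theta_1}A_{\theta_2}$ cross term of the $\omega\ge\pi$ branch.

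The delicate step will be the bookkeeping of the translated domains $d_i$: the split piece $[0,\psi_1)\cup[\psi_2,2\pi)$ inside $c_1$ must be handled via the $2\pi$-periodicity of $\sin$ and $\cos$ so that each one-dimensional integral collapses to a single difference of endpoint values. A useful consistency check is $\omega=\pi$, where both branches coincide, and, further specializing to $A_{\theta_1}=A_{\theta_2}=1$, both reduce to $4\eta_{\text{LoS}}M^2/\pi^2$---matching the sufficient-phase case of Theorem~\ref{ThmIdl} evaluated at $k=1$.
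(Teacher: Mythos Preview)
Your proposal is essentially the paper's own argument: both simplify (P3) to the pure-LoS, $k=1$ comparison $A_{\theta_1}\cos(\theta_1-\theta)\gtrless A_{\theta_2}\cos(\theta_2-\theta)$, locate the two crossover points $\psi_1,\psi_2=\psi_1+\pi$, and then integrate over the resulting arcs to collapse (\ref{eq21}) into the closed form. The paper packages the case split via $\omega'=\min(\omega,\pi)$ and converts $\zeta(\theta)=A_{\theta_1}\cos\theta-A_{\theta_2}\cos(\omega'-\theta)$ to a single sine, whereas you branch on $\theta_2\in\{\omega,\pi\}$ and solve via $\tan\theta$; your explicit factorization of the double sum into $(\sum_i A_{\theta_i}\int_{d_i}\cos)^2+(\sum_i A_{\theta_i}\int_{d_i}\sin)^2$ is a detail the paper leaves implicit in passing from (\ref{eq42}) to (\ref{eq47}). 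One slip to fix: expanding $A_{\theta_1}\cos\theta=A_{\theta_2}\cos(\theta_2-\theta)$ gives $\tan\theta=(A_{\theta_1}-A_{\theta_2}\cos\theta_2)/(A_{\theta_2}\sin\theta_2)$, the negative of what you wrote, consistent with the paper's $\psi_1=-\arctan\frac{A_{\theta_2}\cos\omega'-A_{\theta_1}}{A_{\theta_2}\sin\omega'}$.
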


 \begin{proof} 
 See Appendix \ref{Append: Cor1}.
\end{proof}

The above corollary shows that given the amplitude for each quantized reflection coefficient, the LARP of the RIS-aided system with the number of bits $k = 1$ tends to decrease when the phase shift capability $\omega$ declines. However, provided a phase shift capability $\omega$, the impact of the reflection amplitude on the RIS-aided system is not straightforward; we will show this in the following section using simulations.

 \section{Simulation and Experimental Results}\label{Sec:Sim}
 
In this section, we make both simulations and experimental measurements to evaluate the performance of the group-based query algorithm and validate the theoretic results presented in this work.
 
 \subsection{Simulation Results}

\begin{figure}[!t]
	\centering
	\includegraphics[width=3in]{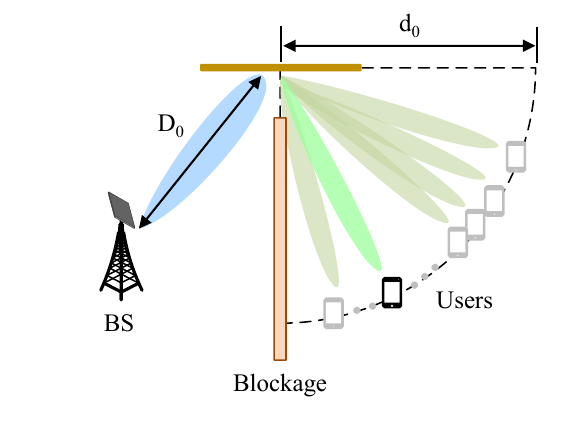}
	% where an .eps filename suffix will be assumed under latex, 
	% and a .pdf suffix will be assumed for pdflatex; or what has been declared via 
	\DeclareGraphicsExtensions.
	\caption{Simulation setup.}
	\label{FigSim}
\end{figure}

\begin{figure}[!t]
	\centering
	\includegraphics[width=3in]{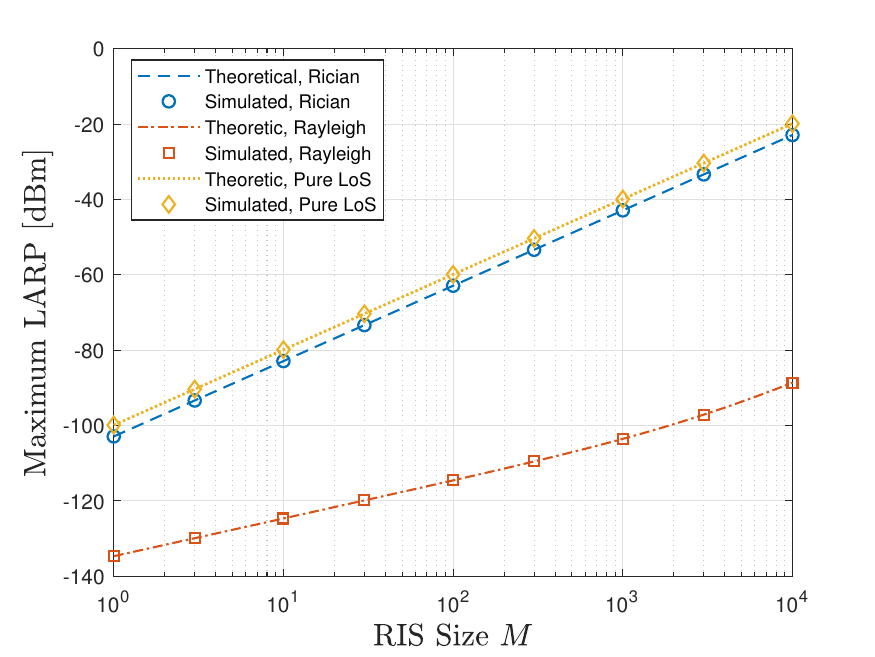}
	% where an .eps filename suffix will be assumed under latex, 
	% and a .pdf suffix will be assumed for pdflatex; or what has been declared via 
	\DeclareGraphicsExtensions.
	\caption{The LARP versus the RIS size $ M $, continuous phase shifts.}
	\label{fig_sim}
\end{figure}

In this subsection, we analyze the performance of an RIS-aided communication system with users distributed randomly and uniformly on a quarter circle of radius $d_0$ centered on the RIS, as shown in Fig. \ref{FigSim}. The simulation results in all figures are averaged over 2000 independent realizations of the different user locations. The channel parameters and RIS system parameters were chosen in accordance with the 3GPP propagation environment outlined in \cite{3gpp}. Unless otherwise notified, the simulation parameters are as follows. ${D_0 = 90 \; \text{m}}$ is the distance between the BS and the RIS center, and ${d_0 = 70 \; \text{m}}$ is the distance between the user and the RIS center. The number of RIS elements is $ {M  = 4096}$, and the sizes of RIS elements are ${d_h = d_v = 0.05 \; \text{m}}$. The transmit power is ${20 \; \text{dBm}}$, and the noise power is ${-90 \; \text{dBm}}$. The path loss of LoS and NLoS is configured based on the UMa model defined in \cite{3gpp}. The carrier frequency is ${2.6 \; \text{GHz}}$, and the Rician factor is $K_1 = K_2 = 4$.

Fig. \ref{fig_sim} shows the maximum LARP with continuous phase shifts versus the number of elements $ M $. As shown in Fig. \ref{fig_sim}, our theoretical results are very close to the simulated ones. The figure also shows that for all three channel conditions, the maximum received power increases with the number of elements $ M $. Furthermore, the slope of the maximum received power curve for the pure LoS channel is 20, indicating that the received power is proportional to the square of the number of RIS elements $ M $. Similarly, the slope of the maximum LARP curve for the Rayleigh channel is 10, meaning that the received power is proportional to $ M $. In addition, we can see that the maximum LARP increases with the Rician factors $K_1$ and $K_2$.

\begin{figure}[!t]
	\centering
	\includegraphics[width=3in]{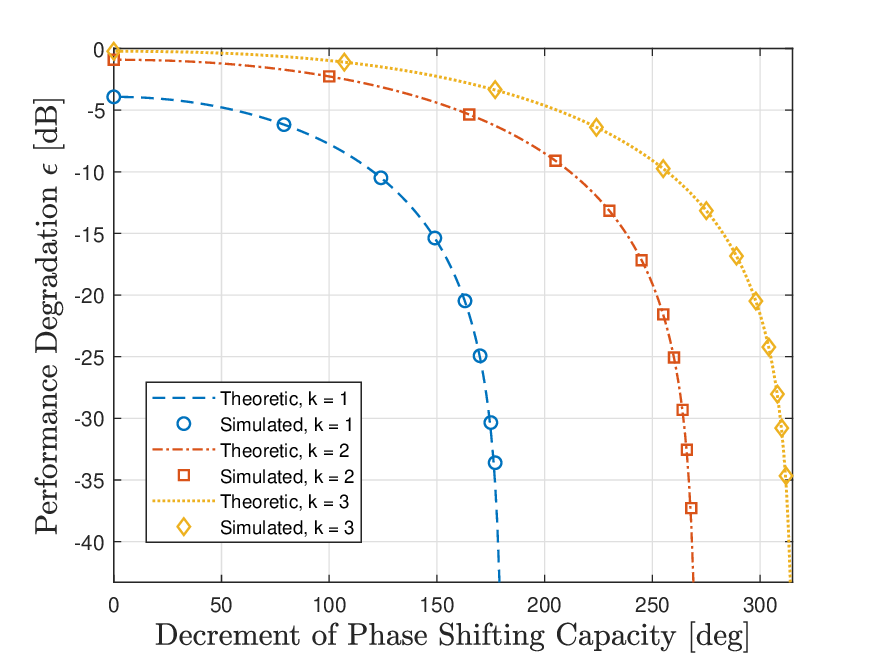}
	% where an .eps filename suffix will be assumed under latex, 
	% and a .pdf suffix will be assumed for pdflatex; or what has been declared via 
	\DeclareGraphicsExtensions.
	\caption{The performance degradation $\varepsilon$ versus the decrement of phase shifting capability $c$, uniform reflection amplitude.}
	\label{Fig5}
\end{figure}

To quantify the performance degradation in this scenario, we define a loss factor $\varepsilon$:
\begin{equation}\label{eq23}
	{\varepsilon}={\log _{10}}\frac{\mathbb{E}\left(\hat\Gamma_{\rm{max}}\right)}{{{\Gamma _{\rm{max}}}}}.
\end{equation}
The loss factor is the ratio of the average received power based on the realistic discrete phase shifting model to the maximum received power with continuous phase shift given by (\ref{eq14}), which represents the performance degradation caused by practical phase shifters.

Fig. \ref{Fig5} shows the expectation of the LARP as a function of the decrement of phase shifting capability $c$ under the ideal reflection model.
Here, $c$ is defined as the difference between the phase capability of the RIS and the necessary phase shifting capability of the quantization bits, which means $c=\max [0, \frac{{2^k} - 1}{2^k} \cdot 2\pi-\omega]$.
In Fig. \ref{Fig5}, the numbers of quantization bits are set to 1, 2, and 3, respectively.
As can be seen, more quantization bits lead to higher performance for the same phase shifting capability, which is expected because a larger number of bits reduces phase quantization error.
Furthermore, according to Fig. \ref{Fig5}, for systems with different numbers of control bits, the LARP decreases significantly with the decrement of the phase shifting capability of the system, especially when the number of quantization bits is smaller.
For the 1-bit, 2-bit, and 3-bit quantized reflection coefficients, a 3 dB LARP degradation is caused by a 90°, 140°, and 175° phase capability decrement, respectively.
Besides, the theoretical results obtained according to Theorem \ref{ThmIdl} are in good agreement with the simulation results in Fig. \ref{Fig5}.

\begin{figure}[!t]
	\centering
        \includegraphics[width=3 in]{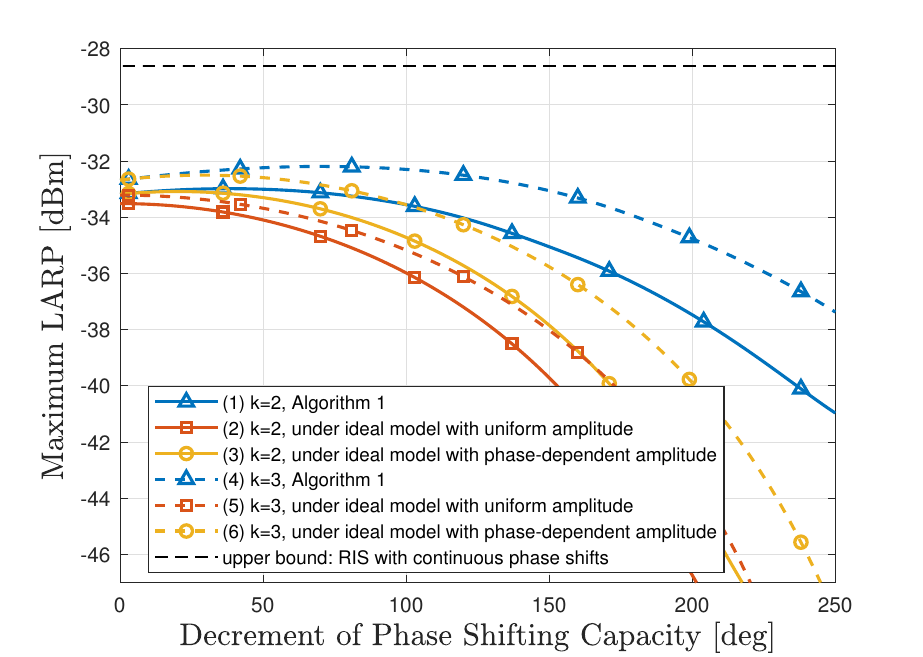}
        \DeclareGraphicsExtensions.
        \caption{The maximum LARP versus decrement of phase shifting capability $c$ when $k=2$ and $k=3$.}
        \label{Fig:AlgA}
\end{figure}

\begin{figure}[!t]
	\centering
        \includegraphics[width=3 in]{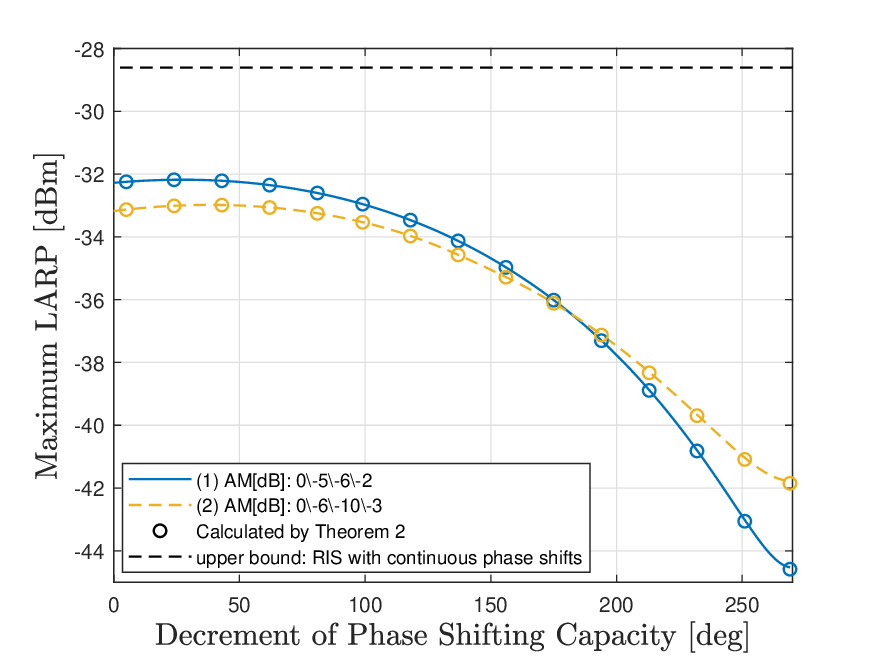}
        \DeclareGraphicsExtensions.
        \caption{The maximum LARP versus decrement of phase shifting capability $c$ under different amplitudes of the reflected signal when $k=2$.}
        \label{Fig:AlgB}
\end{figure}

Next, by varying the decrement of phase shifting capability $c$, the LARP is compared in Fig. \ref{Fig:AlgA} for the following three schemes of computing the discrete RIS phase shifts: (i) group-based query algorithm, (ii) sufficient phase shifting model with uniform reflection amplitude, as in \cite{nidl1}, and (iii) sufficient phase shifting model with phase-dependent amplitude, as in \cite{PracticalModelWuqq}.
The value of the reflected signal amplitudes is determined by the amplitude response and phase response of the prototypes.
The curves (1), (2), and (3) in Fig. \ref{Fig:AlgA} represent 2-bit quantized systems with states `00’, `01’, `10’, and `11’ corresponding to reflection amplitudes ${0\; \text{dB}}$, ${-6\; \text{dB}}$, ${-10\; \text{dB}}$, and ${-3\; \text{dB}}$, respectively.
The curves (4), (5), and (6) in Fig. \ref{Fig:AlgA} represent 3-bit quantized systems with corresponding reflection amplitudes of ${0 \; \text{dB}}$, ${-3\; \text{dB}}$, ${-6 \;\text{dB}}$, ${-9\; \text{dB}}$, ${-10\; \text{dB}}$, ${-7\; \text{dB}}$, ${-3\; \text{dB}}$, and ${-2\; \text{dB}}$ in the `000’, `001’, `010’, `011’, `100’, `101’, `110’, and `111’ states, respectively.
This figure shows that the proposed group-based query algorithm outperforms the schemes based on the sufficient phase shifting model.
This superiority is attributed to its ability to strike a balance between the reflected signal amplitude by individual elements and phase alignment over all the elements, in accordance with the actual amplitude and phase-shifting capability, so as to achieve the maximum signal power at the receiver.
When the number of bits $k$ increases, the performance gap between our scheme and the other two increases, especially when the phase shifting capability is relatively small.

In Fig. \ref{Fig:AlgB}, we evaluate the differences of the maximum LARP between two 2-bit quantized RIS-aided systems in which the RIS elements have different amplitudes of the reflected signal. The reflection coefficients of the RISs are determined by the proposed group-based query algorithm.
`AM[dB]’ in this figure means the value of the reflected signal amplitudes.
The `00’, `01’, `10’, and `11’ states in curve (1) correspond to the reflection amplitudes ${0\; \text{dB}}$, ${-5\; \text{dB}}$, ${-6\; \text{dB}}$, and ${-2\; \text{dB}}$, respectively. In curve (2), the states `00’, `01’, `10’, and `11’ correspond to reflection amplitudes ${0\; \text{dB}}$, ${-6\; \text{dB}}$, ${-10\; \text{dB}}$, and ${-3\; \text{dB}}$, respectively.
It can be seen from this figure that the derived theoretical expressions match well with the simulation results, which validates the proposed theorem.
Besides, from the comparison of the curves (1) and (2), we can see that when the RIS system has sufficient phase shifting capability or a comparatively large phase shifting capability, the lower reflection amplitudes lead to a lower LARP, which is expected because lower reflection signal amplitudes mean a lower reflection signal power.
Nevertheless, when the phase shift capability of the RIS system is below a specified threshold, a lower reflection amplitude of the RIS elements improves the LARP of the system, as shown in Fig. \ref{Fig:AlgB}.
This is due to the fact that when the phase shifting capability of the system is reduced to a specific level, the quantization error $\delta$ will be greater than 90° for some desired phase, indicating that the reflected signal of the RIS elements will have a negative impact on the LARP; in this case, the lower the reflection amplitudes mean the negative impact on the LARP is smaller.

 \begin{figure}[!t]
	\centering
	\subfloat[Theoretical block diagram of the prototype system.]
	%	\centering	
	{
		\includegraphics[width=2.5in]{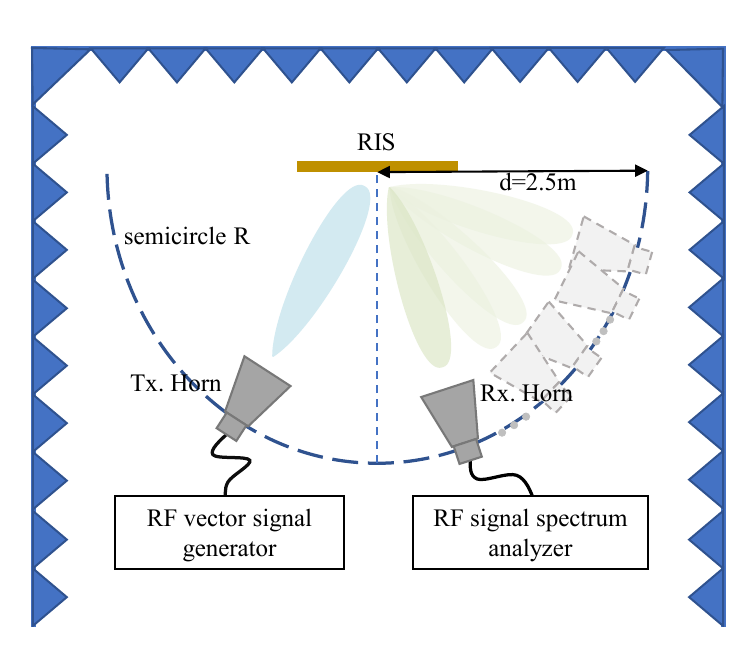}
		\label{Fig9a}
	}
	\quad
	
	\subfloat[Physical diagram of the prototype system]{
		\includegraphics[width=3.3in]{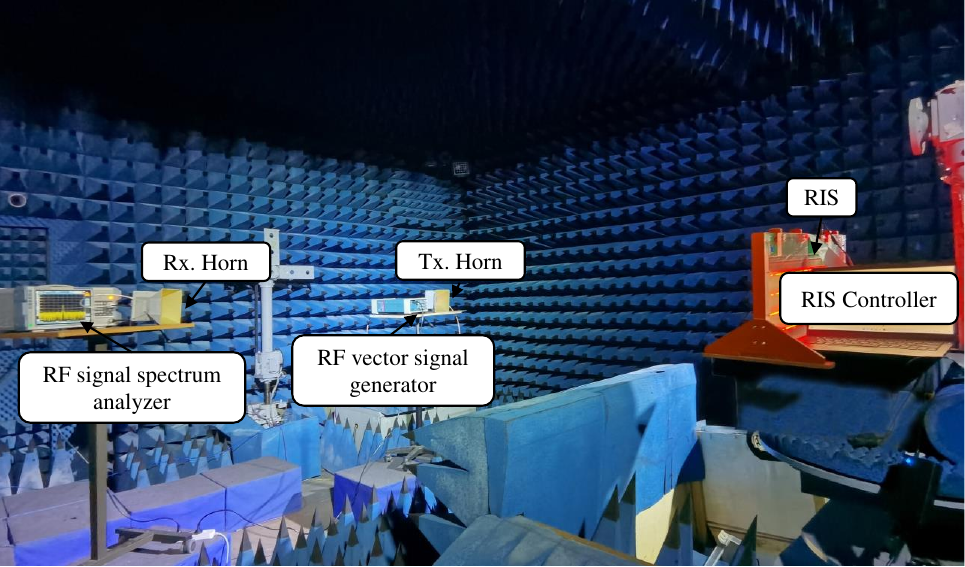}
		\label{Fig9b}
	}
	
	\caption{The measurement platform for RIS-aided wireless communications.}\label{Fig9}
\end{figure}

 \begin{figure}[!t]
    \centering
	\subfloat[The RIS working at 5.8 GHz]
%	\centering	
	{
		\includegraphics[width=2.5in]{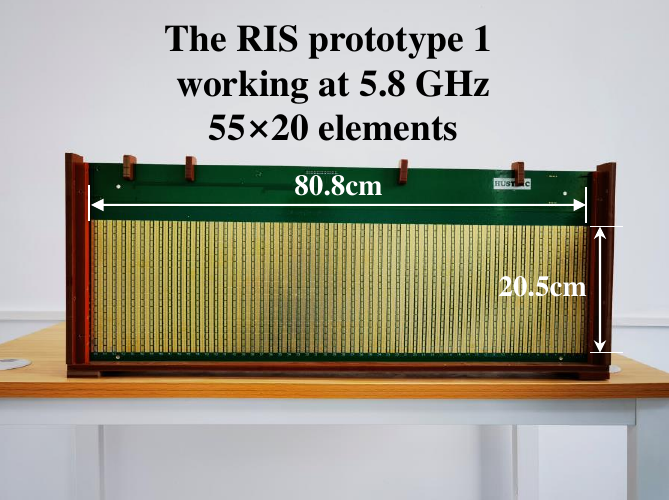}
		\label{Fig10a}
	}
	\quad
	
	\subfloat[The RIS working at 2.6 GHz]	{
		\includegraphics[width=2.5in]{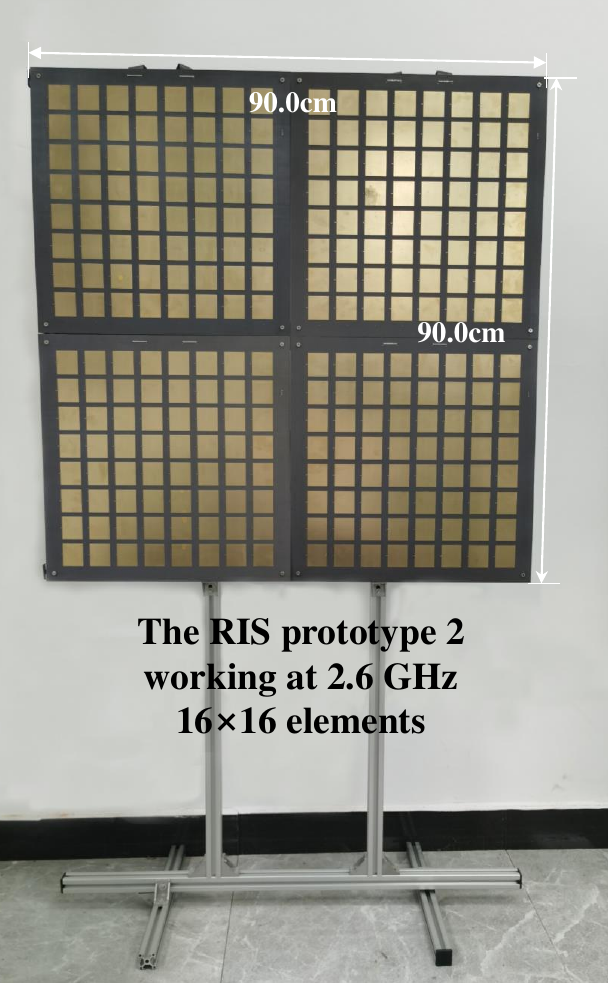}
		\label{Fig10b}
	}
	
	\caption{Photographs of the RISs utilized for the performance measurements of RIS-aided wireless communications.}\label{Fig10}
\end{figure}

\subsection{Experimental Measurements}
In this subsection, the experimental results validate the effect of realistic discrete phase shifters on the performance of RIS-aided communication systems.
We established a measurement system and employed two different RISs with non-ideal reflection coefficients. Fig. \ref{Fig9} illustrates the measurement system, which includes RIS, an RF vector signal generator (Tektronix TSG4106A), a Tx horn antenna, an RF signal spectrum analyzer (Rohde $\&$ Schwarz ZNB 8), cables, and blockages (electromagnetic absorbers). The RIS, Tx horn antenna, and Rx horn antenna are horizontally polarized and well matched in the experimental measurement. As shown in Fig. \ref{Fig9} (a), the transmitting and receiving antennas are positioned on a semicircle with RIS at the center and a radius of $ {d = 2.5\; \text{m}}$. The transmitting and receiving horn antennas are aligned with the center of the RIS. The RF vector signal generator provides the RF signal to the Tx horn antenna. The signal reflected by the RIS propagates over the distance $ d $ and is received by the Rx horn antenna and the RF signal spectrum analyzer, which gives the measured received signal power.

Fig. \ref{Fig10} shows the RISs used in different scenarios. The RIS in Fig. \ref{Fig10} (a) operates at 5.8 GHz with element sizes of ${d_h = 14.3\; \text{mm}}$, ${d_v = 10.27\; \text{mm}}$, and the number of elements ${M = 1100}$. More details of this RIS can be found in our previous work \cite{5G8Prototype}. The RIS in Fig. \ref{Fig10} (b) operates at 2.6 GHz, and has ${M = 256}$ elements with the element sizes ${d_h = 45\; \text{mm}}$ and ${d_v = 45\; \text{mm}}$. Both RISs are 1-bit regulated by varactor diodes. Since altering the bias voltage changes the impedance of the varactor diode as well as the loss induced by the dielectric substrate, metal plate, etc, the reflection phase and amplitude of the reflected signal will fluctuate unpredictably.

We measured the phase and amplitude differences of the two states of 5.8 GHz RIS at different incident angles at 3 V and 7 V bias voltages, representing the states `0’ and `1’, respectively.
As shown in Table \ref{tab:Dparameters}, due to the spatial dispersion of the RIS unit, the reflection coefficient of RIS is sensitive to incident angle, implying that an RIS system with sufficient phase shift capability at a specific incident angle may become less than satisfactory when the incident angle is altered. In other words, when the incident angle increases, the phase shifting capability of the RIS unit decreases, which will lead to the degradation of system performance.
Then, using the 5.8 GHz RIS, we conduct experiments to investigate the impact of incidence angle change on the received power and compare the measured results to those calculated according to proposition \ref{PropNIdl}.
In this scenario, for evaluating system performance degradation, the system performance at 10° incidence is served as the baseline.
We move the transmitting antenna to various angles and select 12 random places on the circular arc $R$ illustrated in Fig. \ref{Fig9} (a) to measure the power after RIS beamforming at each of these points and average the results.
As shown in Table \ref{tab:Dparameters}, the system with sufficient phase shifting capability at 10° incidence has a diminishing trend in beamforming capability as the incident angle increases.
As shown in Fig. \ref{fig_11}, the system performance decreases as the angle of incidence increases.
Besides, the measured curve follows the same trend as the theoretical curve, with the biggest difference being only about 0.3 dB. This discrepancy may be due to environmental factors.

\begin{figure}[!t]
\centering
\includegraphics[width=3.3in]{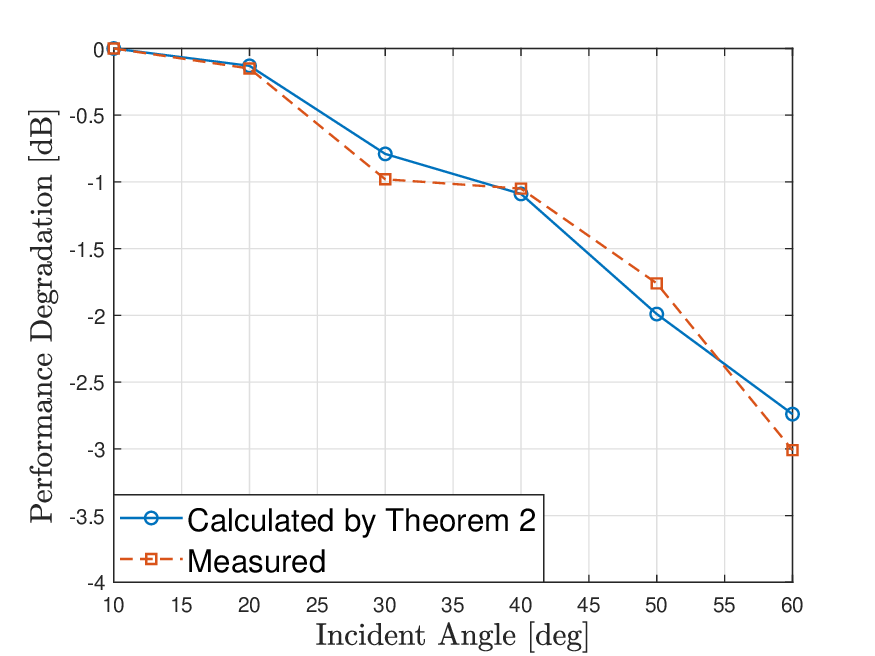}
% where an .eps filename suffix will be assumed under latex, 
% and a .pdf suffix will be assumed for pdflatex; or what has been declared via 
\DeclareGraphicsExtensions.
\caption{Performance degradation versus incident angle.}
\label{fig_11}
\end{figure}

 \begin{table}[!t] 
	\renewcommand{\arraystretch}{1.3}
	\caption{Measured Reflection Coefficients}
	\label{tab:Dparameters}
	\centering
	\begin{tabular}{|ccccccc|}
		\hline\hline
		\textbf{Incident angle (degree)}      & 10   & 20   & 30   & 40     & 50   & 60     \\ \hline
		\textbf{Phase difference (degree)}    & 180   & 160   & 132   & 117     & 107   & 76     \\ \hline
		\textbf{Amplitude difference (dB)}    & 2     & 0.7     & 0.1     & 0.3      & 2.3    & 1.5      \\ \hline
		\hline
		
	\end{tabular}
\end{table}
 
\begin{figure}[!t]
\centering
\includegraphics[width=3in]{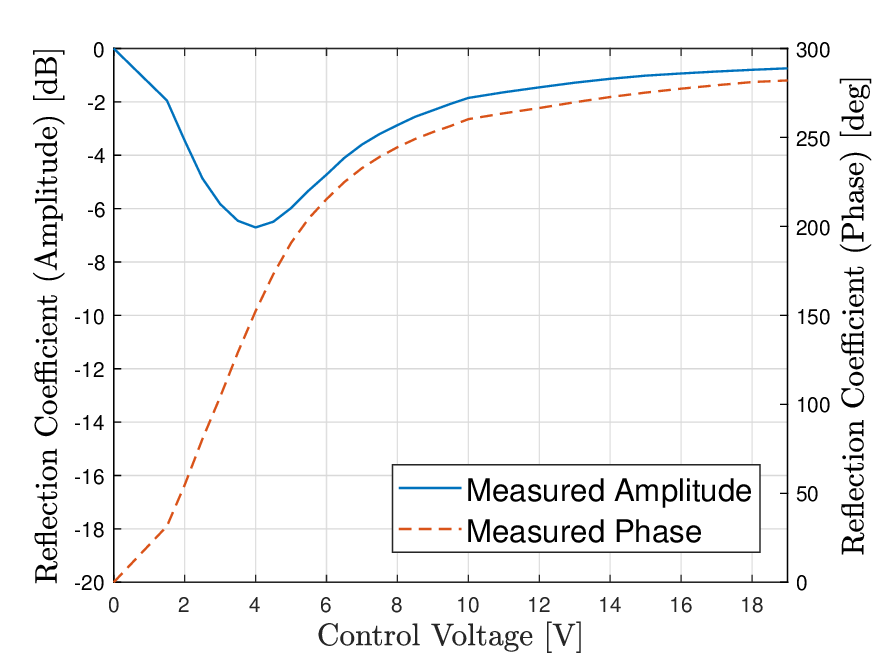}
% where an .eps filename suffix will be assumed under latex, 
% and a .pdf suffix will be assumed for pdflatex; or what has been declared via 
\DeclareGraphicsExtensions.
\caption{The relationship between the control voltage and the coefficient of the elements for the 2.6 GHz RIS.}
\label{fig_12}
\end{figure}
 
\begin{figure}[!t]
\centering
\includegraphics[width=3.3in]{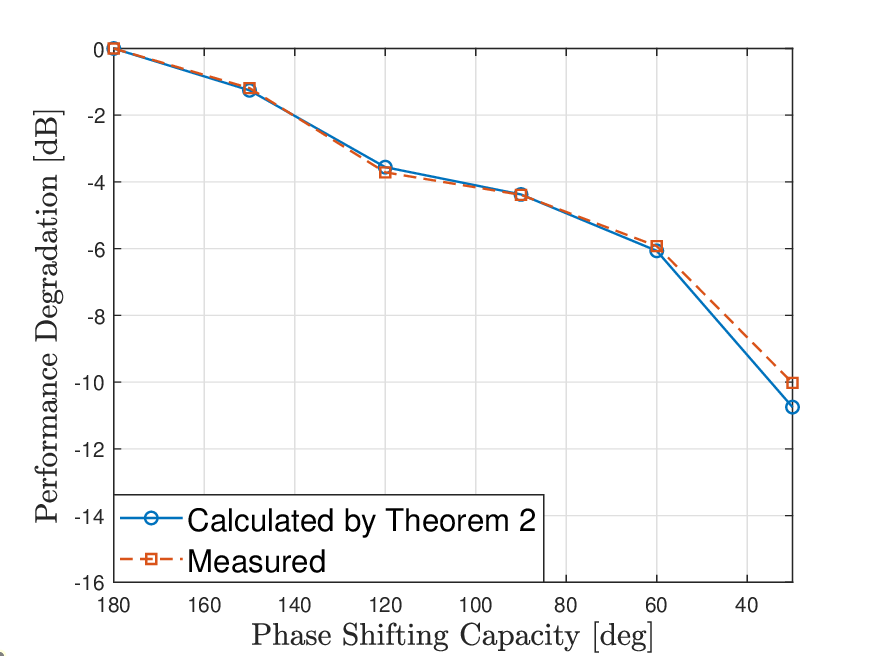}
% where an .eps filename suffix will be assumed under latex, 
% and a .pdf suffix will be assumed for pdflatex; or what has been declared via 
\DeclareGraphicsExtensions.
\caption{Performance degradation versus phase shifting capacity.}
\label{fig_13}
\end{figure}

Furthermore, based on the 2.6 GHz RIS prototype, we simulate that the RIS system lacks sufficient phase shifting capability due to the RIS element design by varying the bias voltage corresponding to the states `0’ and `1’, evaluate the performance of the RIS system with different phase shifting capabilities, and compare it to theoretical results.
We fixed the incident angle at 10° and measured the relationship between the 2.6 GHz RIS control voltage and the phase shift and amplitude of each RIS element; the results are presented in Fig. \ref{fig_12}. The bias voltage is then manually adjusted to vary the phase difference between the state `0’ and the state `1’ of the RIS.
Six sets of bias voltages based on the measured results in Fig. \ref{fig_12} are chosen to make the phase difference 180°, 150°, 120°, 90°, 60°, and 30°, respectively.
In this scenario, the received power of RIS at a set of bias voltages that produced a 180° phase difference is served as a baseline for performance comparison.
For each pair of bias voltages, 12 positions are chosen at random on the circular arc R shown in Fig. \ref{Fig9} (a). The received power after RIS beamforming is measured at these positions, and the results are averaged to obtain the measurement results as shown in Fig. \ref{fig_13}.
We observe that the beamforming capability of the RIS diminishes as the phase shifting capability of the system decreases. Hence, sufficient phase shift capability must be ensured at the RIS element design stage.
The measured curve has the same trend as the theoretical curve, and the biggest difference is only about 0.3 dB. This discrepancy in results, like in the previous experiment, could be attributed to environmental influences.

 \newcounter{TempEqCntEq25} 
 \begin{figure*}[hb] 
 	\hrulefill 
 	\begin{align}\label{eq25}
 		\begin{split}
 			{\left\|{{\mathbf{f} ^H}\mathbf{\Phi h}}\right\|}^2 
 			&= {\left\| {{{\left( \sqrt {\frac{{K_2}}{{K_2} + 1}}\mathbf{\bar f}+\sqrt {\frac{1}{{K_2} + 1}}\mathbf{\tilde f } \right)}^H}\mathbf{\Phi}  \left( \sqrt {\frac{{K_1}}{{K_1} + 1}}\mathbf{\bar h}+\sqrt {\frac{1}{{K_1} + 1}}\mathbf{\tilde h } \right)} \right\|^2}\\
 			&= \frac{1}{\left({K_1 + 1}\right)\left({K_2+1}\right)}{\left\|{\underbrace{\sqrt{{K_1}{K_2}}{\mathbf{\bar f} ^H}\mathbf{\Phi \bar h} }_{{x_1}} + \underbrace {\sqrt {{K_1}} {{\mathbf{\tilde f} }^H}\mathbf{\Phi \bar h} }_{{x_2}} + \underbrace {\sqrt {{K_2}} {{\mathbf{\bar f}}^H}\mathbf{\Phi \tilde h} }_{x_3} + \underbrace{{{\mathbf{\tilde f}}^H}\mathbf{\Phi \tilde h} }_{x_4}} \right\|^2}.
 		\end{split}
 	\end{align}	
 \end{figure*}
 
\section{Conclusion}\label{Sec:Conclusion}
In this paper, we proposed a realistic reflection coefficient model for RIS-aided wireless communication systems, which takes into account the discreteness of the phase shift, the attenuation of the reflective signal and the limited phase shifting capability.
The maximum received power of the user based on this model was derived.
We then proposed a group-based query algorithm to maximize the received power for the RIS-aided system with the realistic reflection coefficient model.
We analyzed the asymptotic performance of the proposed algorithm and derived the closed-form expression for the maximum long-term average received power.
Finally, by conducting both simulations and corresponding experiments with the fabricated RIS prototype systems, we verified the proposed theoretical results.
The simulated results and measurement results all match quite well with the analytical results.
% if have a single appendix:
%\appendix[Proof of the Zonklar Equations]
% or
%\appendix  % for no appendix heading
% do not use \section anymore after \appendix, only \section*
% is possibly needed

% use appendices with more than one appendix
% then use \section to start each appendix
% you must declare a \section before using any
% \subsection or using \label (\appendices by itself
% starts a section numbered zero.)
%

%
\appendices
% you can choose not to have a title for an appendix
% if you want by leaving the argument blank
\section{Proof of Proposition 1}\label{Append: Prop1}
By applying (\ref{eq2}), (\ref{eq3}) in ${\left\| {{\mathbf{f} ^H}\mathbf{\Phi h}}\right\|}^2$, Eq. (\ref{eq25}) at the bottom of the next page holds.

Therefore, $\mathbb{E}\left\{{{{\left\| {{\mathbf{f} ^H}\mathbf{\Phi h}}\right\|}^2}}\right\}$ is given by
\begin{align}\label{eq26}
	\begin{split}
	\mathbb{E}&\left\{{{{\left\| {{\mathbf{f} ^H}\mathbf{\Phi h}}\right\|}^2}}\right\} = \frac{1}{{\left( {{K_1} + 1} \right)\left( {{K_2} + 1} \right) }} \\
	&\times \left( {\sum\limits_{i = 1}^4 {\mathbb{E}\left\{ {{{\left\| {{x_i}} \right\|}^2}} \right\}}  + \sum\limits_{i = 1,j = 1,i \ne j}^4 {\mathbb{E}\left\{ {{x_i}^H{x_j}} \right\}} } \right).
	\end{split}
\end{align}	

Since the NLoS components are independent with each other, and have zero means, any correlation between channel matrices is zero. We observe that
\begin{equation}\label{eq27}
	\mathbb{E}\left\{ {{x_i}^H{x_j}} \right\} = 0,i,j = 1,2,3,4,i \ne j.
\end{equation}

For the LoS channel, by applying (\ref{eq4}) and (\ref{eq5}) in $\mathbb{E}\left\{ {{{\left\| {{x_1}} \right\|}^2}} \right\}$, we may derive
\begin{align}\label{eq28}
	\begin{split}
	\mathbb{E}\left\{ {{{\left\| {{x_1}} \right\|}^2}} \right\} &= {K_1}{K_2} {\left( {\mathbf{\bar f} ^H}\mathbf{\Phi \bar h} \right)^H}\left( {\mathbf{\bar f} ^H}\mathbf{\Phi \bar h}\right)\\
	&= {K_1}{K_2} {\sqrt {D_1^{-\alpha }d_1^{-\alpha }}} {G_a} \\
	&\times \left( {\sum\limits_{m,{m'}}}{{{A_m}{A_{m'}}{e^{ - j[{\phi _m} - {\phi _{m'}} + {\theta _m} - {\theta _{m'}}]}}} }\right).
	\end{split}
\end{align}	

According to the random matrix theory, it is easy to obtain
\begin{align}\label{eq29}
	\begin{split}
	&\mathbb{E}\left\{ {{\mathbf{\Phi} ^H}\mathbf{\Phi} } \right\} = diag\left( A_1^2,\cdots,A_m^2,\cdots,A_M^2\right),\\
	&\mathbb{E}\left\{ {\mathbf{\tilde f{{\tilde f}}^H}} \right\} = M,\\
	&\mathbb{E}\left\{ {{{\mathbf{\tilde h}}^H}\mathbf{\tilde h}} \right\} = 1.
	\end{split}
\end{align}	

Thus, we may derive
\begin{align}\label{eq30}
	\begin{split}
	\mathbb{E}\left\{ {{{\left\| {{x_2}} \right\|}^2}} \right\} &={K_1} \mathbb{E}\left\{ {{\mathbf{\bar h}^H}\mathbf{\Phi}^H \mathbf{\tilde f}{\mathbf{\tilde f}^H}\mathbf{\Phi \bar h}} \right\}\\
	&= {K_1} {\sqrt {{G_a}D_1^{-\alpha }}}{L( d_1 )}\sum\limits_m {A_{{m}}^2}. 
	\end{split}
\end{align}	

Similarly,
\begin{align}\label{eq31}
	\begin{split}
	\mathbb{E}\left\{ {{{\left\| {{x_3}} \right\|}^2}} \right\} &= {K_2} \mathbb{E}\left\{ {{\mathbf{\tilde h}^H}{\mathbf{\Phi} ^H}\mathbf{\bar f}{{\mathbf{\bar f}}^H}\mathbf{\Phi \tilde h}} \right\}\\
	&= {K_2} {\sqrt {{G_a}d_1^{-\alpha }}}{L( D_1 )}\sum\limits_m {A_{{m}}^2}, 
	\end{split}
\end{align}	
\begin{align}\label{eq32}
	\begin{split}
	\mathbb{E}\left\{ {{{\left\| {{x_4}} \right\|}^2}} \right\} &= \mathbb{E}\left\{ {{\mathbf{\tilde h}^H}{\mathbf{\Phi} ^H}\mathbf{\tilde f}{\mathbf{\tilde f}^H}\mathbf{\Phi \tilde h}} \right\}\\
	&={L( D_1 )}{L( d_1 )} \sum\limits_m {A_{{m}}^2}.  
	\end{split}
\end{align}	

Then, by applying (\ref{eq26}), (\ref{eq28})-(\ref{eq32}) to (\ref{eq10}), we obtain:
\begin{align}\label{eq33}
	\begin{split}
		\Gamma = \kappa _\text{NLoS}\sum\limits_m {A_{{m}}^2} + \kappa _\text{LoS} \sum\limits_{m,m'} {{A_m}{A_{m'}}{e^{ - j[{\phi _m} - {\phi _{m'}} + {\theta _m} - {\theta _{m'}}]}}}. 
	\end{split}
\end{align}	

This ends the proof.\hfill {$\square$}

\newcounter{TempEqCntEq37} % 创建临时变量TempEqCnt
\begin{figure*}[hb] %hb代表放在文章底部，%ht为放在文章顶部
	\hrulefill  %上面那条横线
	\begin{align}\label{eq37}
		\mathbb{E}\left[ {\varphi({\delta _m},{\delta _{m'}})} \right] =
		\left\{
		\begin{aligned}
			&\int_{-\frac{\pi}{2^{{k}}}}^{\frac{\pi}{2^{{k}}}}{\int_{-\frac{\pi }{2^{{k}}}}^{\frac{\pi }{2^{{k}}}} {\frac{2^{{{2k - 2}}}}{\pi ^2}\varphi({\delta _m},{\delta _{m'}})} } d{\delta _m}d{\delta _{m'}},\quad \omega  \ge \frac{{2^{{k}}} - 1}{2^{{k - 1}}}\pi ;\\
			&\int_{ - a }^{ - b } {\int_{-a }^{ - b } {{P_2}^2\varphi({\delta _m},{\delta _{m'}})} } d{\delta _m}d{\delta _{m'}} + \int_{ - a }^{ - b } {\int_{ - b}^b  {{P_2}{P_1}\varphi({\delta _m},{\delta _{m'}})} } d{\delta _m}d{\delta _{m'}} +\\
			&\int_{-a}^{-b}{\int_b^a{{P_2}^2 \varphi({\delta _m},{\delta _{m'}})}} d{\delta _m}d{\delta _{m'}} + \int_{- b}^b {\int_{-a}^{ -b } {{P_1}{P_2} \varphi({\delta _m},{\delta _{m'}})} } d{\delta _m}d{\delta _{m'}} + \\
			&\int_{ -b }^b  {\int_{ -b }^b  {{P_1}^2\varphi ({\delta _m},{\delta _{m'}})} } d{\delta _m}d{\delta _{m'}}+\int_{ - b }^b  {\int_b ^a  {{P_1}{P_2}\varphi({\delta _m},{\delta _{m'}})} } d{\delta _m}d{\delta _{m'}} + \\
			&\int_b ^a  {\int_{ - a }^{ - b } {{P_2}^2 \varphi({\delta _m},{\delta _{m'}})} } d{\delta _m}d{\delta _{m'}}+\int_b ^a  {\int_{ - b }^b  {{P_2} {P_1}\varphi({\delta _m},{\delta _{m'}})} } d{\delta _m}d{\delta _{m'}} + \\
			&\int_b ^a  {\int_b ^a  {{P_2}^2\varphi({\delta _m},{\delta _{m'}})} } d{\delta _m}d{\delta _{m'}},\quad\omega  < \frac{{2^{{k}}} - 1}{2^{{{k - 1}}}}\pi,
		\end{aligned}
		\right.
	\end{align}
	where $a  = \pi  - \frac{\omega }{2}$, $b  = \frac{\omega }{{2\left( {{2^{{k}}} - 1} \right)}}$, ${P_1} = \frac{{{2^{{k}}}}}{{2\pi }}$, ${P_2} = \frac{1}{{2\pi }}$.
\end{figure*}

\section{Proof of Theorem 1}\label{Append: Theor1}
We first define a function $\varphi({\delta _m},{\delta _{m'}})$ as
\begin{equation}\label{eqIDL}
	\varphi({\delta _m},{\delta _{m'}}) = \sin {\delta _m}\sin {\delta _{m'}} + \cos {\delta _m}\cos {\delta _{m'}}.
\end{equation}
Then according to the expression of LARP showed in (\ref{eq11}), we obtain the expectation of the maximum LARP $\mathbb{E}\left( {\hat \Gamma_{\rm{max}} } \right)$:
\begin{align}\label{eq34}
	\begin{split}
		\mathbb{E}\left( {\hat \Gamma_{\rm{max}} } \right) = \kappa _\text{NLoS}M+  \kappa _\text{LoS}{M^2}\mathbb{E}\left[ {\varphi({\delta _m},{\delta _{m'}})} \right]. 
	\end{split}
\end{align}	

Since $K_1$, $K_2$, $\eta _\text{NLoS1}$, $\eta _\text{NLoS2}$, $\eta _\text{NLoS3}$, $\eta _\text{LoS}$ and $M$ are constant, we will focus on $\mathbb{E}\left[ {\varphi({\delta _m},{\delta _{m'}})} \right]$ in the following.

Because the expected phase shift $\theta _m^* $ is uniformly distributed in $ [0,2\pi )$ and the discrete phase shift closest to the expected one will be chosen to achieve the maximum LARP, depending on phase shifting capability $\omega$, the quantization error $\delta$ is uniformly distributed on $\left[ { - \frac{\pi }{{{2^{{k}}}}},\frac{\pi }{{{2^{{k}}}}}} \right)$, or uniformly distributed over each of three contiguous subintervals that are $\left[ {\frac{\omega }{2} - \pi ,- \frac{\omega }{{2\left( {{2^{{k}}} - 1} \right)}}} \right)$, $\left[ { - \frac{\omega }{{2\left( {{2^{{k}}} - 1} \right)}} , \frac{\omega }{{2\left( {{2^{{k}}} - 1} \right)}}} \right)$ and $\left[ {\frac{\omega }{{2\left( {{2^{{k}}} - 1} \right)}} , \pi  - \frac{\omega }{2}} \right)$. When $\omega  < \frac{{{2^{{k}}} - 1}}{{{2^{{{k - 1}}}}}}\pi $, the probability density function (PDF) of $\delta$ is obtained as 
\begin{align}\label{eq36}
	&f_{\delta }\left( \delta  \right) \nonumber\\
	&=\left\{
	\begin{aligned}
		&\frac{{{2^{{k}}}}}{{2\pi }},\quad \delta  \in \left[ { - \frac{\omega }{{2\left( {{2^{{k}}} - 1} \right)}},\frac{\omega }{{2\left( {{2^{{k}}} - 1} \right)}}} \right);\\
		&\frac{1}{{2\pi }},\quad \delta  \in \left[ {\frac{\omega }{2} - \pi, - \frac{\omega }{{2\left( {{2^{{k}}} - 1} \right)}}} \right) \cup \left[ {\frac{\omega }{{2\left( {{2^{{k}}} - 1} \right)}}, \pi  - \frac{\omega }{2}} \right);\\
		&0,\quad {\rm{others}}.
	\end{aligned}
	\right.
\end{align}
Otherwise, when $\omega  \ge \frac{{{2^{{k}}} - 1}}{{{2^{{{k - 1}}}}}}\pi $, the PDF of $\delta$ is obtained as
\begin{equation}\label{eq35}
	f_{\delta }\left( \delta  \right) =\left\{
	\begin{aligned}
		&\frac{{{2^{{k}}}}}{{2\pi }}, \quad \delta  \in \left[ { - \frac{\pi }{{{2^{{k}}}}},\frac{\pi }{{{2^{{k}}}}}} \right);\\
		& 0, \quad {\rm{others}}.
	\end{aligned}
	\right.
\end{equation}

 $\mathbb{E}\left[ {\varphi({\delta _m},{\delta _{m'}})} \right]$ is expressed as (\ref{eq37}) in the RIS-aided system with phase shifting capability $\omega$ in terms of PDF of $\delta$, which is shown at the bottom of the next page.

Applying (\ref{eqIDL}) to (\ref{eq37}) and through some basic algebraic manipulations, we derive that
\begin{align}\label{eq38}
	&\mathbb{E}\left[ {\varphi({\delta _m},{\delta _{m'}})} \right]\nonumber\\
	&=\left\{
	\begin{aligned}
		&\frac{2^{{2k}}}{\pi ^2}{\sin ^2}\frac{\pi }{2^{{k}}},\quad \omega\ge \frac{2^{{k}} - 1}{2^{{k - 1}}}\pi ;\\
		& 4\left[{P_1}\sin b + {P_2}(\sin a - \sin b) \right]^2,\quad\omega  < \frac{2^{{k}} - 1}{2^{{k - 1}}}\pi . 
	\end{aligned}
	\right.
\end{align}

By substituting (\ref{eq38}) into (\ref{eq34}), $\mathbb{E}\left( {\hat \Gamma_{\rm{max}} } \right)$ is obtained as (\ref{eq17}). This ends the proof.\hfill {$\square$}

\section{Proof of Theorem 2}\label{Append: Theor2}
 According to the expression of LARP shown in (\ref{eq11}), the maximum LARP expectation $\mathbb{E}(\hat \Gamma_{\rm{max}})$ in this scenario is given by
\begin{align}\label{eq39}
	\begin{split}
		\mathbb{E}(\hat \Gamma_{\rm{max}})
		=&\kappa _\text{NLoS}M\mathbb{E}\left[ A_m^2\right]+ \kappa _\text{LoS}{M^2}\mathbb{E}\left[ {{\varrho }({\delta _m},{\delta _{m'}})} \right],
	\end{split}
\end{align}
where 
\begin{equation}\label{eq40}
	{\varrho }({\delta _m},{\delta _{m'}}) = {A_m}{A_{m'}}\left( {\sin {\delta _m}\sin {\delta _{m'}} + \cos {\delta _m}\cos {\delta _{m'}}} \right).
\end{equation}

As $K_1$, $K_2$, $\eta _\text{NLoS1}$, $\eta _\text{NLoS2}$, $\eta _\text{NLoS3}$, $\eta _\text{LoS}$ and M are constant, the key to derive $\mathbb{E}(\hat \Gamma_{\rm{max}})$ is to derive $\mathbb{E}\left[ A_m^2\right]$ and $\mathbb{E}\left[ {{\varrho }({\delta _m},{\delta _{m'}})} \right]$.

Since the optimal phase shift ${\theta}^*$ is uniformly distributed in $[0,2\pi )$, the probability density of ${\theta}^*$ in $[0,2\pi )$ is always $1/2\pi$. 
Therefore, the probability of using the $i-$th quantified reflection coefficient is obtained as
\begin{equation}\label{eqPI}
	{P_i} = \frac{\mu _i}{2\pi},
\end{equation}
where $\mu _i$ denotes the length of ${c_i}$ which is the optimal phase shift range corresponding to each quantized reflection coefficient $\Phi_i$ and is obtained from the group-based query algorithm. 

Thus, $\mathbb{E}\left[ A_m^2\right]$ is obtained as
\begin{equation}\label{eAm}
	\mathbb{E}\left[ A_m^2\right]=\sum\limits_{i} {	{P_i}A_{\theta _i}^2 } = \sum\limits_{i} {\frac{\mu_i}{2\pi}A_{\theta _i}^2 },
\end{equation}

For each ${c_i}$, the corresponding quantization error $d_i$ is denoted by
\begin{equation}\label{eq41}
	{d_i} = {c_i} - {\theta _i}.
\end{equation}

Since the probability density of ${\theta}^*$ in $[0,2\pi )$ is always $1/2\pi$, the joint probability density of ${\delta _m}$ and ${\delta _{m'}}$ is $1/4\pi^2$ when $(\delta_m \in d_i) \cup (\delta_{m'} \in d_{i'}) , i,i'=1,\cdots,2^k$.

Thus, $\mathbb{E}\left[ {{\varrho }({\delta _m},{\delta _{m'}})} \right]$ is obtained as
\begin{align}\label{eq42}
	&\mathbb{E}\left[ {{\varrho }\left( {{\delta _m},{\delta _{m'}}} \right)} \right]\nonumber\\
	&= \sum\limits_{i,i'} \int_{{\delta _m} \in {d_i}} {\int_{{\delta _{m'}} \in {d_{i'}}} {{\frac{1}{{4{\pi ^2}}}{\varrho }\left( {{\delta _m},{\delta _{m'}}} \right)d{\delta _m}d{\delta _{m'}}} } },
\end{align}

By substituting (\ref{eAm}) and (\ref{eq42}) into (\ref{eq39}), the LARP expectation $\mathbb{E}\left( {{{\hat \Gamma }_{\max }}} \right)$ is obtained as (\ref{eq21}). This ends the proof.\hfill {$\square$}

\section{Proof of Corollary 1}\label{Append: Cor1}

We first define $\omega ' = \min \left( {\omega ,\pi } \right)$. When the RIS is 1-bit coded and the channels are pure LoS paths, the optimization problem (P3) of the reflection coefficient of the $m-$th element is simplified to:
\begin{equation}\label{eq43}
	{\hat{\phi}^*_m} = \mathop {\max } \left[ {{A_{\theta_1}}\cos ( - {\theta ^*_m}),{A_{\theta_2}}\cos (\omega ' - {\theta ^*_m})} \right],
\end{equation}
which can be rewritten as
\begin{equation}\label{eq44}
	{\hat{\phi}^*_m} =\left\{
	\begin{aligned}
		&{\Phi _1},\left[ {{{A_{\theta_1}}}\cos ( - {\theta ^*_m}) - {A_{\theta_2}}\cos (\omega ' - {\theta ^*_m})} \right] > 0;\\
		&{\Phi _2},\left[ {{A_{\theta_1}}\cos ( - {\theta ^*_m}) - {A_{\theta_2}}\cos (\omega ' - {\theta ^*_m})} \right] < 0.
	\end{aligned}
	\right.
\end{equation}

To solve the optimal phase shift range $c_i$, we define a function as
\begin{equation}\label{eq_slvci}
	{\zeta(\theta)} = {A_{\theta_1}}\cos ( - {\theta}) - {A_{\theta_2}}\cos (\omega ' - {\theta}).
\end{equation}
Eq. (\ref{eq_slvci}) can be converted to
\begin{align}\label{eq45}
	{\zeta(\theta)}=&-\sqrt {{A_{\theta_2}}^2{{\sin }^2}{\omega '} + {{\left( {{A_{\theta_2}}\cos {\omega '} - {A_{\theta_1}}} \right)}^2}}\nonumber\\
	&\times\sin \left( {{\theta} + \arctan \frac{{{A_{\theta_2}}\cos {\omega '} - {A_{\theta_1}}}}{{A_{\theta_2}}{\sin {\omega '}}}} \right).
\end{align}

Since (\ref{eq45}) is obviously in sine form, we rewrite (\ref{eq44}) as 
\begin{equation}\label{eq46}
	{\hat{\phi}^*_m} =\left\{
	\begin{aligned}
		&{A_{\theta_1}}{e^{ - j{\omega '} \cdot 0}},{\theta ^*_m} \in \left[ {0,{\psi _1}} \right) \cup \left[ {{\psi _2},2\pi } \right);\\
		&{A_{\theta_2}}{e^{ - j{\omega '} \cdot 1}},{\theta ^*_m} \in \left[ {{\psi _1},{\psi _2}} \right).
	\end{aligned}
	\right.
\end{equation}
where ${\psi _1}$ and ${\psi _2}$ are denoted by
\begin{align}\label{eq:psi}
    &{\psi _1} =  -  \arctan \frac{{{A_{\theta_2}}\cos {\omega '} - {A_{\theta_1}}}}{{A_{\theta_2}}{\sin {\omega '}}} \in \left[ 0, \frac{\pi}{2} \right)\cup \left( \frac{\pi}{2}, \pi \right),\\
    &{\psi _2} = \pi  -  \arctan \frac{{{A_{\theta_2}}\cos {\omega '} - {A_{\theta_1}}}}{{A_{\theta_2}}{\sin {\omega '}}} \in \left[ \pi, \frac{3\pi}{2} \right)\cup \left( \frac{3\pi}{2}, 2\pi \right).
\end{align}

Applying $c_1 \in \left[ {0,{\psi _1}} \right) \cup \left[ {{\psi _2},2\pi } \right)$ and $c_2 \in  \left[ {{\psi _1},{\psi _2}} \right)$ in (\ref{eq42}), we derive that
\begin{equation}\label{eq47}
	\mathbb{E}\left[ {{\varrho}\left( {{\delta _m},{\delta _{m'}}} \right)} \right] = \frac{{A_{{\theta _1}}^2 + A_{{\theta _2}}^2 - 2{A_{{\theta _1}}}{A_{{\theta _2}}}\cos \omega '}}{{\pi ^2}}.
\end{equation}

By substituting (\ref{eq47}) into (\ref{eq39}), $\mathbb{E}\left( \hat{\Gamma}_{max}\right)$ is obtained as (\ref{eq22}). This ends the proof.\hfill {$\square$}

% use section* for acknowledgment
%\section*{Acknowledgment}

%The authors would like to thank...

% Can use something like this to put references on a page
% by themselves when using endfloat and the captionsoff option.
\ifCLASSOPTIONcaptionsoff
  \newpage
\fi

% trigger a \newpage just before the given reference
% number - used to balance the columns on the last page
% adjust value as needed - may need to be readjusted if
% the document is modified later
%\IEEEtriggeratref{8}
% The "triggered" command can be changed if desired:
%\IEEEtriggercmd{\enlargethispage{-5in}}

% references section

% can use a bibliography generated by BibTeX as a .bbl file
% BibTeX documentation can be easily obtained at:
% http://mirror.ctan.org/biblio/bibtex/contrib/doc/
% The IEEEtran BibTeX style support page is at:
% http://www.michaelshell.org/tex/ieeetran/bibtex/
%\bibliographystyle{IEEEtran}
% argument is your BibTeX string definitions and bibliography database(s)
%\bibliography{IEEEabrv,../bib/paper}
%
% <OR> manually copy in the resultant .bbl file
% set second argument of \begin to the number of references
% (used to reserve space for the reference number labels box)
%\begin{thebibliography}{1}
%
%\bibitem{IEEEhowto:kopka}
%H.~Kopka and P.~W. Daly, \emph{A Guide to \LaTeX}, 3rd~ed.\hskip 1em plus
%  0.5em minus 0.4em\relax Harlow, England: Addison-Wesley, 1999.
%
%
%\end{thebibliography}
\bibliographystyle{IEEEtran}
\bibliography{references}

\begin{IEEEbiography}[{\includegraphics[width=1in,height=1.25in,clip,keepaspectratio]{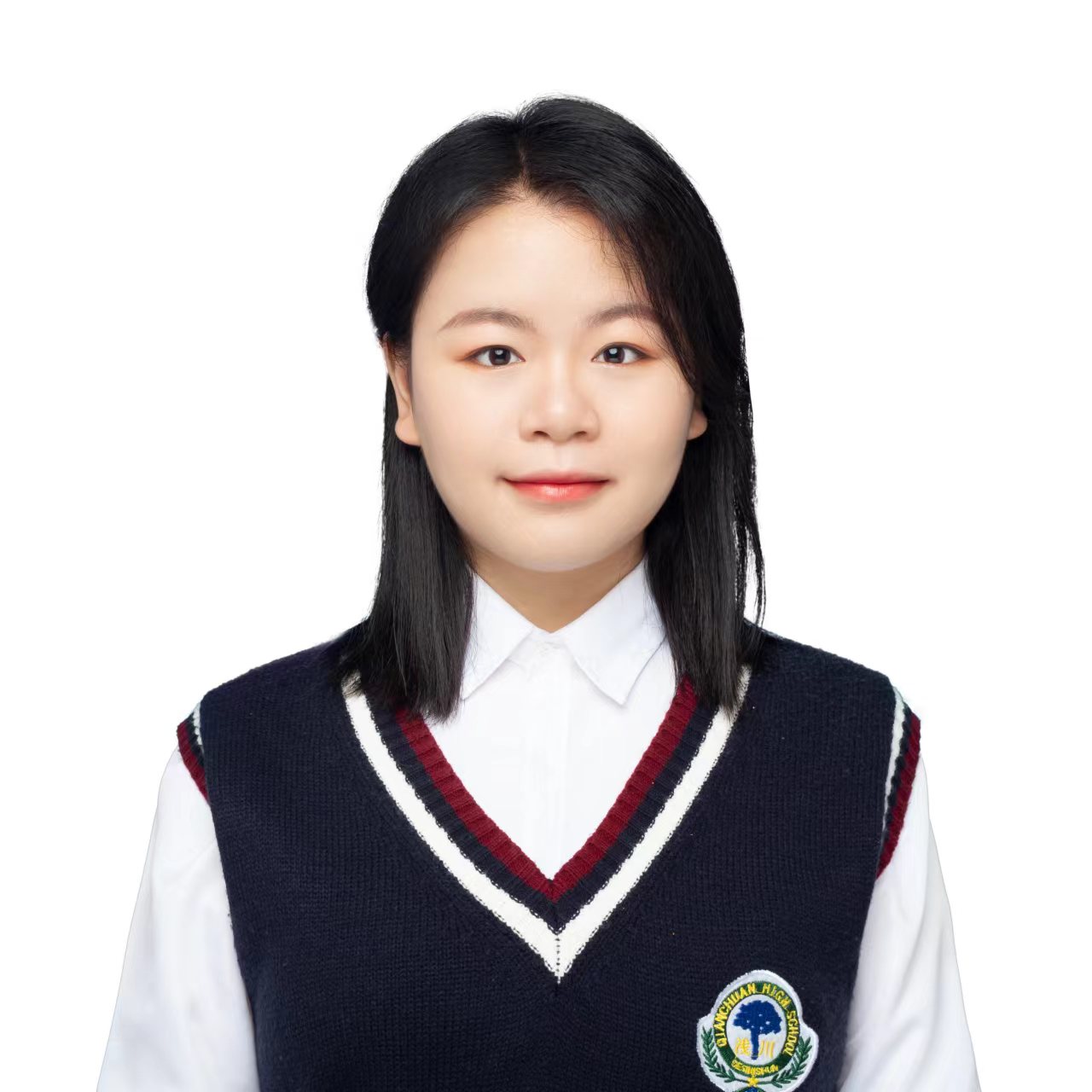}}]{Lin Cao}
	received her B.Sc. degree in Electronic Science and Technology from the School of Optical and Electronic Information, Huazhong University of Science and Technology, Wuhan, China, in 2020. She is currently a Ph.D. student in the School of Electronic Information and Communications, Huazhong University of Science and Technology, Wuhan, China. Her research interests include analysis and applications of reconfigurable intelligent surface.
\end{IEEEbiography}

\begin{IEEEbiography}[{\includegraphics[width=1in,height=1.25in,clip,keepaspectratio]{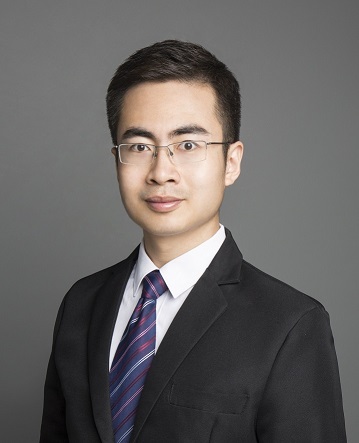}}]{Haifan Yin}
	(Senior Member, IEEE) received the B.Sc. degree in electrical and electronic engineering and the M.Sc. degree in electronics and information engineering from the Huazhong University of Science and Technology, Wuhan, China, in 2009 and 2012, respectively, and the Ph.D. degree from T\'el\'ecom ParisTech ParisTech in 2015. From 2009 to 2011, he was a Research and Development Engineer with the Wuhan National Laboratory for Optoelectronics, Wuhan, working on the implementation of TD-LTE systems. From 2016 to 2017, he was a DSP Engineer at Sequans Communications (IoT chipmaker), Paris, France. From 2017 to 2019, he was a Senior Research Engineer working on 5G standardization at Shanghai Huawei Technologies Company Ltd., where he has made substantial contributions to 5G standards, particularly the 5G codebooks. Since May 2019, he has been a Full Professor with the School of Electronic Information and Communications, Huazhong University of Science and Technology. His current research interests include 5G and 6G networks, signal processing, machine learning, and massive MIMO systems. He was the National Champion of 2021 High Potential Innovation Prize awarded by the Chinese Academy of Engineering, a recipient of the China Youth May Fourth Medal (the top honor for young Chinese), and a recipient of the 2015 Chinese Government Award for Outstanding Selffinanced Students Abroad.
\end{IEEEbiography}

\begin{IEEEbiography}[{\includegraphics[width=1in,height=1.25in,clip,keepaspectratio]{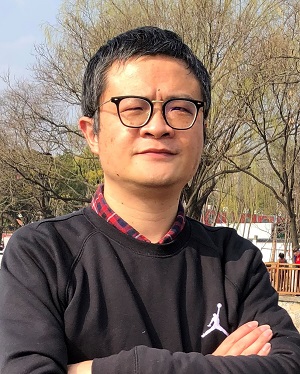}}]{Li Tan}
	was born in Hubei, China, in June 1976. He received the B.Sc. degree in Telecommunications Engineering, the M.Sc. degree in Electrical Circuit and System, and the D.Sc. degree in Information and Communication Engineering from Huazhong University of Science and Technology, Wuhan, China, in 1999, 2002, and 2009, respectively.
	Since July 1999, he has been a Lecturer with Huazhong University of Science and Technology. His current research interests include wireless communication, radio resource management, energy efficiency, reconfigurable intelligent surface, stochastic computing, and probabilistic CMOS.
\end{IEEEbiography}

\begin{IEEEbiography}[{\includegraphics[width=1in,height=1.25in,clip,keepaspectratio]{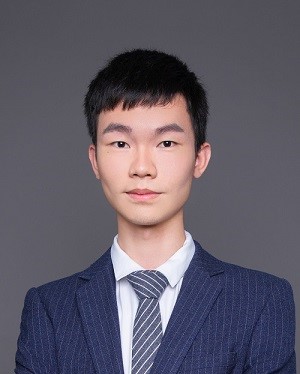}}]{Xilong Pei}
	received the B.Sc. (with highest honors) in Automation from Huazhong University of Science and Technology, China, in 2021. He is currently a Ph.D. student in the School of Electronic Information and Communications, Huazhong University of Science and Technology, China. His research interests include wireless communications, signal processing, and reconfigurable intelligent surface.
\end{IEEEbiography}

% biography section
% 
% If you have an EPS/PDF photo (graphicx package needed) extra braces are
% needed around the contents of the optional argument to biography to prevent
% the LaTeX parser from getting confused when it sees the complicated
% \includegraphics command within an optional argument. (You could create
% your own custom macro containing the \includegraphics command to make things
% simpler here.)
%\begin{IEEEbiography}[{\includegraphics[width=1in,height=1.25in,clip,keepaspectratio]{mshell}}]{Michael Shell}
% or if you just want to reserve a space for a photo:

%\begin{IEEEbiography}{Michael Shell}
%Biography text here.
%\end{IEEEbiography}

% if you will not have a photo at all:
%\begin{IEEEbiographynophoto}{John Doe}
%Biography text here.
%\end{IEEEbiographynophoto}

% insert where needed to balance the two columns on the last page with
% biographies
%\newpage

%\begin{IEEEbiographynophoto}{Jane Doe}
%Biography text here.
%\end{IEEEbiographynophoto}

% You can push biographies down or up by placing
% a \vfill before or after them. The appropriate
% use of \vfill depends on what kind of text is
% on the last page and whether or not the columns
% are being equalized.

%\vfill

% Can be used to pull up biographies so that the bottom of the last one
% is flush with the other column.
%\enlargethispage{-5in}

% that's all folks
\end{document}